\newif\ifclean
\newcommand{\ie}{\textit{i.e.,} }
\newcommand{\eg}{\textit{e.g.,} }
\newcommand{\senc}{\mathsf{senc}}
\newcommand{\mac}{\mathsf{mac}}
\newcommand{\exenc}{\mathsf{enc}}
\newcommand{\exdec}{\mathsf{dec}}
\newcommand{\pair}[2]{\langle #1;#2\rangle}
\newcommand{\bit}{\mathsf{bit}}
\newcommand{\qubit}{\mathsf{qubit}}
\definecolor{gris}{gray}{0.85}
\newtheorem{example}{Example}
\newtheorem{definition}{Definition}
\newcommand{\xRightarrow}[2][]{\ext@arrow 0359\Rightarrowfill@{#1}{#2}}
\newcommand{\rightdoublearrow}{%
  \rightarrow\mkern-10mu\protect\joinrel\rightarrow}
\newcommand{\rightdoublearrowfill@}
  {\arrowfill@\relbar\relbar\rightdoublearrow}
\newcommand{\mapstodoublearrowfill@}
  {\arrowfill@{\mapstochar\relbar}\relbar\rightdoublearrow}
\newcommand{\xrightdoublearrow}[2][]
  {\ext@arrow 3{15}59\rightdoublearrowfill@{#1}{#2}}
\newcommand{\xmapstodoublearrow}[2][]
  {\ext@arrow 3{15}59\mapstodoublearrowfill@{#1}{#2}}
\newcommand{\akiss}{\textsc{Akiss}\xspace}
\newcommand{\tamarin}{\textsc{Tamarin}\xspace}
\newcommand{\proverif}{\textsc{ProVerif}\xspace}
\newcommand{\deepsec}{\textsc{DeepSec}\xspace}
\newcommand{\etc}{\textit{etc.}\xspace}
\def\rightarrowfillstar@{\arrowfill@\relbar\relbar{\rightarrow\smash{^*}}}
\newcommand{\xrightarrowstar}[2][]{\ext@arrow
  0{13}{15}8\rightarrowfillstar@{#1}{#2}}
\newcommand{\lrstep}{\@ifstar{\xrightarrowstar}{\xrightarrow}}
\newcommand{\N}{\mathcal{N}}
\newcommand{\theo}{\approx_E}
\newtheorem{postulate}{Postulate}
\newtheorem{remark}{Remark}
\newtheorem{theorem}{Theorem}
\newtheorem{attack}{Attack}
\newtheorem{abstraction}{Abstraction}
\crefname{abstraction}{Abstraction}{Abstractions}
\crefname{theorem}{Theorem}{Theorems}
\crefname{remark}{Remark}{Remarks}
\crefname{attack}{Attack}{Attacks}
\crefname{postulate}{Postulate}{Postulates}
\renewcommand{\dag}{^\dagger}
\newcommand{\deff}{\triangleq}
\definecolor{mygreen}{rgb}{0,0.6,0}
\definecolor{mygray}{rgb}{0.4,0.4,0.4}
\definecolor{mymauve}{rgb}{0.6,0,0.9}
\lstdefinelanguage{Tamarin}
{
  sensitive=true,
  keywordstyle=[1]{\color{blue}},
  keywordstyle=[0]{\tt \color{mygreen}},
  keywordstyle=[2]{\color{mymauve}},
  keywordstyle=[3]{\color{red}},
  commentstyle = {\color{mygray}},
  stringstyle = {\color{mygreen}},
  otherkeywords={--[,]->,-->,--,>,<,[,],=,!,:,$},
  morekeywords=[0]{~,$},
  morekeywords=[1]{let,in,Fr,In,Out,!},
  morekeywords=[2]{--[,--,>,<,-,]->,-->,[,],signing,=},
  morekeywords=[3]{restriction,lemma,rule,bultins,functions,equations,:},
  morecomment=[l]{//}, 
  morecomment=[s]{/*}{*/}, 
  morestring=[b]', 
}
\newcommand{\factStyle}[1]{\textsf{#1}}
\newcommand{\rwrleft}{\mathrel{-\!\!\!\!-\!\!\![}}
\newcommand{\rwrright}{\mathrel{]\!\!\!\rightarrow}}
\newcommand{\rwr}[1][]{{\mathrel{\rwrleft #1 \rwrright}}}
\title{Symbolic Abstractions for Quantum Protocol Verification}
\author{Lucca Hirschi\footnote{This work was partially done while the author was at ETH Zurich, Switzerland.}}
\affil{Inria \& LORIA, Nancy, France\\
          \href{mailto:lucca.hirschi@inria.fr}{lucca.hirschi@inria.fr}}
\begin{document}
\date{}
\maketitle

\paragraph*{Abstract.}
\looseness=-1
Quantum protocols such as the BB84 Quantum Key Distribution protocol exchange qubits
to achieve information-theoretic security guarantees. Many variants thereof were proposed,
some of them being already deployed.
Existing security proofs in that field are mostly tedious, error-prone pen-and-paper proofs of the core
protocol only that rarely account for other crucial components such as authentication.
This calls for formal and automated verification techniques
that exhaustively explore all possible intruder behaviors and that scale well.

The {\em symbolic approach} offers rigorous, mathematical frameworks and automated tools
to analyze security protocols. Based on well-designed abstractions, it has allowed
for large-scale formal analyses of 
real-life protocols such as TLS 1.3 and mobile telephony protocols to be conducted fully automatically.

Hence a natural question is:
{\em Can we use this successful line of work to analyze quantum protocols?}
This paper proposes a first positive answer and motivates further research on this unexplored path.
\smallskip{}

\tableofcontents

\section{Introduction}
\subsection{Background}

\paragraph{Quantum Physics.}
Quantum physics usually distinguishes between {\em quantum computation} and {\em quantum information}.
The former focuses on new algorithms and computation capabilities that require quantum computers
while the latter embraces new information exchange mechanisms enabled by quantum physics.
When focusing on cryptography, the same dichotomy exists: {\em post-quantum cryptography} refers to new cryptography
schemes that are expected to be resistant against quantum computers while {\em quantum protocols} are new security protocols
that rely on quantum information mechanisms to achieve security properties, which are often information-theoretic rather than conditional.
The present work focuses on quantum protocols.

\paragraph{Quantum Protocols.}
\looseness=-1
Quantum protocols such as the BB84 Quantum Key Distribution (QKD) protocol~\cite{bennett1984quantum} or its bit commitment
variant~\cite{brassard1990quantum} exchange qubits to achieve extremely strong security guarantees.
Namely, quantum physics ensures that observing the exchanged qubits disturbs the observed data. This can be exploited by the parties involved to effectively
detect the presence of an eavesdropper, which in turn can be leveraged to achieve information-theoretic security guarantees.

\looseness=-1
Many different quantum protocols, some already deployed and commercialized\footnote{See for example \url{www.idquantique.com}.},
essentially rely on the same mechanisms and can be seen as variants of the BB84 QKD protocol (\eg see ~\cite{bruss2007quantum}).
For instance, the Quantum Bit-Commitment (QBC) variant of BB84~\cite{brassard1990quantum} was proposed in 1990 and was
believed to be secure (see~\cite{brassard1993quantum}) until
it was shown to be flawed 7 years later~\cite{lo1997quantum} due to an EPR-attack, \ie based on entangled qubits.
The QKD BB84 protocol, however, is still believed to be secure and various pen-and-paper proofs of
unconditional security exist~\cite{scarani2009security}.
However, such proofs often hold for the core protocol only and rarely account for other crucial components such as authentication or authorization, \cite{portmann2014cryptographic} being a notable exception.
Furthermore, we stress that in such complex settings, where intruders have different capabilities that can be combined in
infinitely many  ways, manual proofs are highly error-prone.
Finally, numerous variants of such protocols have been given or will be proposed in the future, making tedious and error-prone manual proofs impractical.

The proliferation of protocols and the complexity of their proofs call for formal and automated verification techniques
that exhaustively explore all possible intruder behaviors.
%
%
Automation usually comes at the cost of approximations and less precise security guarantees.
In this paper, we explore this trade-off.



\subsection{State-of-the-Art}

\paragraph{Automated Formal Verification.}
{\em Formal methods} offer rigorous, mathematical frameworks to analyze security
protocols.
Two main approaches have emerged to provide mathematical foundations
for this analysis,
starting with the seminal works of~\cite{dolev1983security,GOLDWASSER1984270}:
the {\em computational approach} and
the {\em symbolic approach}.

The computational approach is based on 
the {\em standard model}:
messages are modeled as bit strings,
and agents and the attacker as probabilistic polynomial time Turing machines.
Security goals are then defined using games played by the attacker
and proofs are usually done via reductions (or hops) between
successive games until reaching games expressing computational assumptions on cryptographic primitives.
It is generally acknowledged that security proofs in this model offer strong security guarantees.
However, a serious downside of this approach is that even for small protocols, the proofs are usually difficult,
tedious, and error prone. Moreover,
due to the high complexity of this model, automating such proofs is a difficult problem
that is still in its infancy. 
More generally, computer-aided verification allows for only a low level of automation, even though considerable efforts have been
put in developing verifiers
such as 
CertiCrypt~\cite{barthe2009formal},
EasyCrypt~\cite{barthe2011computer},
FCF~\cite{beringer2015verified,petcher2015foundational},
and F$^*$\cite{barthe2014probabilistic}.

In contrast to the computational approach, the {\em symbolic approach} is used when one is interested in analyzing in a reasonable time
more complex protocols, rather than simple primitives or core protocols.
This model is more abstract and scales better. In particular, it makes strong assumptions on cryptographic primitives
(\ie the perfect cryptography assumption) but fully models algebraic properties of these primitives as well as the protocol agents'
interactions.
Modeling security protocols using the symbolic approach allows one to benefit from
machine support using established techniques, such as
model-checking, resolution, and rewriting techniques.
From the different lines of work in this area there have emerged verification
tools (\eg \tamarin~\cite{TAMARIN-CAV}, \proverif~\cite{PROVERIF}, \deepsec~\cite{DEEPSEC}
) 
and large-scale formal analyses of 
real-life protocols (\eg\linebreak[4]  
TLS 1.3~\cite{cremers2016automated,TLS-PV-CP,TLS-Cas-CCS17}, 
mobile telephony protocols~\cite{5GAKA},
instant messaging protocols~\cite{Signal-PV-CP}, 
and entity authentication protocols~\cite{basin2013provably}).

\looseness=-1
Hence a natural question is:
{\em Can we use this successful line of work to analyze quantum protocols?}
This paper proposes a first positive answer and motivates further research on this unexplored path.

\paragraph{Automated Verification of Quantum Protocols.}
%
%
%
In the standard model, \cite{unruh2019quantum} proposes an extension of the pRHL logic (pRHL is used by EasyCrypt~\cite{barthe2011computer}
and CertiCrypt~\cite{barthe2009formal}) to handle quantum protocols and post-quantum cryptography  schemes. They provide
a tool that produces machine-checked security
proofs. While such techniques aim at establishing extremely strong security guarantees, they inherit the complexity of the computational model
and its low level of automation.
In this paper, we focus on automated verification techniques, even if this means providing weaker guarantees.

Other research explores the use of model checkers for probabilistic distributed programs.
\cite{nagarajan2005automated}~models and analyzes the BB84 QKD protocol using the probabilistic model
checker\linebreak[4] PRISM~\cite{kwiatkowska2001automated}.
While such analyses can quantify the probability for the intruder to be detected or to learn the key,
this approach does not consider a fully adversarial environment but rather considers
a fixed, limited intruder behavior; namely a receive-resend behavior.
%
Abstracting away probabilities, prior works have also used non-probabilistic model-checking tools for distributed programs.
\cite{nagarajan2002formal} verifies, in the presence of a fixed intercept-resend attacker,
that the BB84 QKD protocol is trace equivalent to its specification.
In contrast, \cite{belardinelli2012automated} is interested in verifying safety properties in a non-adversarial environment,
expressed in epistemic logic, using model-checking for multi-agents systems. 
Similarly, \cite{ardeshir2014verification} proposes a verification technique for checking equivalence between
quantum protocols.
None of these works considers a fully adversarial environment and they all fall short of capturing other potential cryptographic components upon which the core quantum protocol is based (\eg authentication).

Symbolic models have been successfully used in the past to model classical security protocols in a fully adversarial setting.
They cannot however be used off-the-shelf to analyze quantum protocols.
The main features thereof that are not handled by classical techniques are:
the intruder's quantum capabilities that are limited by quantum physics laws (\eg no-cloning, measurement) and the
protocol logic conditioned by probabilities, and security parameters.
%

\subsection{Our Contributions.}
\looseness=-1
We formally define a novel extension of the classical Dolev-Yao intruder
accounting for some quantum physics capabilities and extending
the intruder's control to the quantum channels.  Our attacker can
read qubits, produce new qubits, and produce entangled
qubits in order to perform EPR-attacks.
However, his capabilities are restricted
by standard Quantum Physics principles, for example the no-cloning theorem and the
Heisenberg uncertainty principle. Hence, our framework accounts for
a fully adversarial environment with regard to a rich class of
intruder capabilities, as opposed to a fixed, trivial intruder
strategy.  However, due to known limitations of the symbolic model,
our extension does not capture probabilistic attacks in their
full generality.  Still, our extension does capture {\em a class} of
probabilistic attacks by allowing the attacker to guess a fix
amount of the bits that are randomly chosen by honest parties.

\looseness=-1
We show how our quantum Dolev-Yao attacker can be embedded in some classical verification tools for cryptographic protocols
such as \tamarin, \proverif, and \deepsec using involved but generic encodings.
We also show the practical relevance  of our approach by presenting case studies. We analyze with \tamarin key secrecy for one session of the BB84 QKD protocol for
different threat models and automatically identify minimal assumptions in terms of message authenticity and the intruder's capabilities.
We also automatically  find several attacks, some of which were well-known and others that rely on minimal security assumptions that were not previously
clearly identified.
For instance, when sufficiently many verification bits are checked,
we show that the authenticity of three specific classical messages out of four is a minimal requirement.
Namely, we show attacks when this is violated for one of the three messages and provide
a proof in our model otherwise.
In particular, we automatically found that when verification bits are not necessarily authentic,
an EPR attack completely breaks secrecy as the intruder can learn all the bits measured by Bob.
We also model the BB84 QBC protocol and automatically re-discover the EPR-based binding attack that completely defeats the protocol purpose.

Since our framework is based on well-established frameworks and verifiers, 
it can handle complex cryptographic systems containing a
quantum protocol at its core.
This can theoretically be leveraged to assess the security of a whole security system as specified,
or deployed, instead of a single quantum component in isolation.

\paragraph{Our Approach and its Abstractions.}
To achieve the above, we adopt the following abstractions and make the following modeling choices.

\looseness=-1
Based on meta-level probabilistic reasoning, we consider fixed bitstrings instead of
random bitstrings\footnote{Where those random bitstrings correspond to key bits and random bases chosen to encode
key bits}. We thus deal with {\em possibilities} rather than with probabilities.
However, even though the bitstrings under consideration are fixed,
we explore all potential executions that are possible for such bitstrings.
We believe that, by wisely choosing these bitstrings, we can capture
a wide range of logical attacks. 
However, fixing these bitstrings naively would allow the intruder to perform bitstring-dependent attacks,
which would be successful for the real system with only negligible probability.
We thus make the data of the fixed bitstrings (i) initially secret, and
(ii) partially guessable by the intruder.
The amount of data that can be guessed in our model
is defined through meta-level probabilistic reasoning.
We also identify and mitigate an ``intruder's knowledge propagation effect'':
when the intruder correctly guesses a bit $b$ picked by Alice, he automatically
learns all other bits equal to $b$ {\em for the fixed bitstrings} under consideration.
This is taken care of by modeling all elements of the bitstrings by different terms and
by adapting the equality and inequality relations accordingly.
While we will miss out many probabilistic attacks this way,
we believe this is analogous to the
gap between the symbolic and the computational approach for classical cryptography.
We recall that the goal here is to capture some {\em logical attacks}
that can be performed with non-negligible probabilities.
In short, we balance trade-offs differently:
we provide weaker guarantees in exchange for automation.

We build on the symbolic model to model a quantum channel and a quantum intruder.
We model qubits resulting from encoding of bits in orthonormal bases as an uninterpreted
function over the bit and the base.
Our viewpoint is that, analogous to classical channels in the standard symbolic model,
quantum channels should be considered to be entirely under the intruder's control.
We thus model a quantum channel where all outputs are given to the intruder
and all inputs are chosen by the intruder.
Contrary to the computational model that defines what the intruder {\em cannot do},
a symbolic model explicitly specifies what the intruder {\em can do}.
We thus choose a fixed, yet rich, set of quantum intruder capabilities:
our intruder can forward, transform, measure, and forge qubits.
Those capabilities are restricted according to relevant physics laws.
For example, a qubit is {\em consumed} (and cannot be reused) upon measurement or forwarding,
measurement can yield random data (wrong basis) or the encoded bit (matching basis), and
measurement by honest parties sometimes leaks data (\eg in EPR-attacks).


\looseness=-1
Finally, note that our framework is not complete (attacks may be missed) with regard to Nature.
This is to be expected as we only model some intruder capabilities and we then abstract them away.
More surprisingly, our framework does not provide sound falsifications (\ie no false attack) with regard to Nature either. This stems from our abstractions of random bitstrings, which might lead to attack scenarios that only happen with negligible probabilities in reality. While we took efforts to mitigate this, soundness of falsification does not currently hold and we do not see how it could with our current modeling choices.
Hence, our abstractions related to quantum capabilities are attack preserving, while the ones abstracting away probabilities are not.

\paragraph{Outline.}
We recall basics of quantum physics in Section~\ref{sec:back}.
We present some quantum protocols we would like to model in Section~\ref{sec:proto}.
We describe how we modified the regular Dolev-Yao intruder to account for probabilities
and quantum capabilities in Section~\ref{sec:DY}.
Finally, we explain how this enriched intruder model can be modeled in state-of-the art symbolic verifiers and present experimental results
in Section~\ref{sec:verif} and conclude in Section~\ref{sec:conclu}.





\section{A Mathematical Model for Quantum Physics}
\label{sec:back}
We first provide basic background in quantum physics. We closely follow the presentation of~\cite{nielsen2002quantum} and
we treat quantum objects as {\em mathematical objects} with certain properties. 
Our focus in on the general framework provided by quantum physics.
We shall work with some well-established, simple assumptions about the {\em state spaces} of the systems, their {\em dynamics}, and their {\em measurements} that mathematically specifies a general quantum information theory independent
of concrete realizations.


\subsection{States}
\setcounter{postulate}{0}
\begin{postulate}[from \cite{nielsen2002quantum}]
Associated to any isolated physical system is a complex vector space
with an inner product (that is, a Hilbert space [provided it is finite dimensional]) known as the {\em state space} of the
system. The system is completely described by its {\em state vector}, which is a unit
vector in the system's state space.
\end{postulate}
In the present paper, we assume all vector spaces to be finite dimensional and can restrict ourselves to the spaces
$\mathbb{C}^{n}$ for $n\in\mathbb{N}$.

\subsubsection{Qubits}
The state of a qubit is a unit vector in a two-dimensional, complex, vector space with an inner product such as $\mathbb{C}^{2}$. The $\ket{\cdot}$ (pronounced ``ket'') indicates that the object is a vector.
The special states $\ket 0$ and $\ket 1$ are known as the {\em computational basis} (or {\em rectilinear basis}) states\footnote{Examples of realizations of $\ket 0$ and $\ket 1$ are:
  the two different polarizations of a photon,
  the alignment of a nuclear spin in a uniform magnetic field,
  two states (ground or excited) of an electron orbiting a single atom.},
and form an orthonormal basis for this vector space.
In $\mathbb{C}^{2}$, they are defined as follows:
$$\hfill
\ket 0 \deff \mqty[1 \\ 0]
\hfill\quad\text{ and }\quad\hfill
\ket 1 \deff \mqty[0 \\ 1].
$$

A qubit can be in any {\em superposition} of $\ket 0$ and $\ket 1$:
$$
\ket \psi = \alpha\ket 0 + \beta\ket 1.
$$
However, since $\ket \psi$ must be a unitary vector, $\alpha$ and $\beta$ must satisfy: $\abs{\alpha}^2 + \abs{\beta}^2 = 1$.
Counter-intuitively, there is no way to measure precisely $\alpha$ or $\beta$ in the general case. When one measures $\ket \psi$ (with respect to the computational basis),
one obtains $0$ with probability $\abs{\alpha}^2$ ($\ket \psi$ collapse to $\ket 0$) and $1$ with probability $\abs{\beta}^2$ ($\ket \psi$ collapses to $\ket 1$). See the explanations on how the continuous variables $\alpha$ and $\beta$ can be leveraged in Section~\ref{sec:back:mecha:hidden}.

Given $\ket{\psi}$, the object $\bra{\psi}$ (pronounced ``bra'') denotes $\ket{\psi}\dag$ (\ie the Hermitian conjugate of $\ket{\psi}$).
Next, given two vectors $\ket\phi$ and $\ket\psi$, the complex number $\bra{\phi}\ket{\psi}$ denotes the inner product between $\ket \phi$ and $\ket \psi$ (\ie $\ket \phi\dag \ket\psi$).
Finally, given $\ket{\phi}\in\mathbb{C}^n$ and $\ket{\psi}\in\mathbb{C}^m$,
the object $\ket{\psi}\bra{\phi}$ denotes the linear map from $\mathbb{C}^m$ to $\mathbb{C}^n$
whose the matrix representation is
${\ket \psi}\dag\ket{\phi}\in\mathbb{C}^{n\times m}$.
Note that for a vector $\ket v\in\mathbb{C}^m$, it holds that:
$$(\ket{\psi}\bra{\phi})(\ket v)=
\ket\psi\bra{\phi}\ket{v}=
\bra{\phi}\ket{v}\ket{\psi}.$$

\subsubsection{Multiple Qubits}
We now assume that we have have two qubits available. In the classical world, the set of states
obtained by combining the two 1-bit system
is the {\em Cartesian product} of the two 1-bit system sets of states, therefore of dimension 2.
In the quantum setting, the combined system is the {\em tensor product} of the two 1-dimensional vector spaces representing
the two 1-qubit systems, and therefore is of dimension $2\times 2 =4$.
This is formally stated in the fourth postulate.
\setcounter{postulate}{3}
\begin{postulate}[from \cite{nielsen2002quantum}]
The state space of a composite physical system is the tensor product
of the state spaces of the component physical systems. Moreover, if we have
systems numbered 1 through n, and system number $i$ is prepared in the state
$\ket{\psi_i}$, then the joint state of the complete system is $\ket{\psi_1}\otimes \ket{\psi_2}\otimes\ldots\otimes\ket{\psi_n}$.
\end{postulate}
Therefore, the computational basis for pairs of qubits is:
$$
\hfill \ket{00}=\ket 0 \otimes \ket 0,\quad
\ket{01} = \ket 0\otimes\ket 1,\quad
\ket{10}=\ket 1\otimes\ket 0 \text{ and }
\ket {11}=\ket 1\otimes\ket 1.
\hfill
$$
The state vector describing two qubits is thus of the form:
$$
\ket \psi =
\alpha_{00}\ket{00} +
\alpha_{01}\ket{01} +
\alpha_{10}\ket{10} +
\alpha_{11}\ket{11},
$$
such that $\abs{\alpha_{00}}^2+\abs{\alpha_{01}}^2+\abs{\alpha_{10}}^2+\abs{\alpha_{11}}^2=1$.

\subsection{Dynamics}
\setcounter{postulate}{1}
\begin{postulate}[from \cite{nielsen2002quantum}]
  The evolution of a closed quantum system is described by a {\em unitary transformation} (\ie associated matrix is {\em unitary}).
  That is, the state $\ket{\psi}$ of the system at time $t_1$ is related
  to the state $\ket{\psi}$ of the system at time $t_2$ by a unitary operator $U$ which depends only on the times $t_1$ and $t_2$:
  $\ket{\psi'} = U\ket{\psi}$.
\label{post:dynamics}
\end{postulate}

Perhaps surprisingly, this unitarity constraint is the only constraint: any unitary transformation can be realized by
real-world systems, and conversely, any known quantum transformations can be described by a unitary transformation.
Note that a small set (called universal sets) of quantum gates (\ie atomic unitary transformations) are sufficient to generate all
possible unitary transformations.

This postulate only addresses the dynamics of completely {\em closed} systems, which is never the case in the real world.
In practice however, this postulate gives a good approximation of real life systems that are sufficiently isolated
{\em modulo noise}, but the details are outside of the scope of this paper.

A useful unitary transformation on $\mathbb{C}^2$ is the {\em Hadamard gate}:
$$
H = \frac{1}{\sqrt 2 }\mqty[1 & 1 \\ 1 &-1].
$$
$H$ transforms the rectilinear basis into the {\em Hadamard basis} (or the {\em diagonal basis}), formed by the two following qubits:
$$
\hfill
\ket{{+}} \deff H\ket{0} = \frac{1}{\sqrt{2}} \ket 0 + \frac{1}{\sqrt{2}} \ket 1
\hfill\quad\text{ and }\quad\hfill
\ket{-} \deff H\ket 1 = \frac{1}{\sqrt{2}} \ket 0 - \frac{1}{\sqrt{2}} \ket 1.
\hfill
$$


\subsection{Measurements}
\label{sec:back:measure}
\looseness=-1
Postulate~\ref{post:dynamics} describes the dynamics of closed systems only. The following postulate describes how systems evolve
when they are measured (systems are no longer closed then), and what can be observed.
\setcounter{postulate}{2}
\begin{postulate}[from \cite{nielsen2002quantum}]
Quantum measurements are described by a collection $\{M_m\}$ of
{\em measurement operators}. These are operators acting on the state space of the
system being measured. The index m refers to the measurement outcomes that
may occur in the experiment. If the state of the quantum system is $\ket{\psi}$
immediately before the measurement then the probability that result m occurs is
given by
   $p(m) \deff \expval{M_m^\dagger M_m}{\psi}$
and the state of the system after the measurement is
   $(M_m \ket{\psi})/\sqrt{p(m)}$.
The measurement operators satisfy the completeness equation [which ensure that probabilities sum to one],
   $\sum_m M_m^\dagger M_m = I$.
\end{postulate}

\begin{example}
The collection of measurement operators $\{M_0, M_1\}$, where $M_i = \ket i \bra i$, forms a quantum measurement.
It corresponds to measuring in the computational basis $\ket 0, \ket 1$.
Formally, given a qubit $\psi\deff \alpha \ket 0 + \beta \ket 1$, the probability of measuring $0$ is
$$p(0)\deff \bra \psi M_0\dag M_0\ket \psi =
\bra{\psi}\bra{0}\ket{0}\ket{0}\bra{0}\ket{\psi} =
\bra{\psi}\ket{0}\bra{0}\ket{\psi} = \abs{\alpha}^2.$$  
Similarly, $p(1)=\abs{\beta}^2$.
\label{ex:measure}
\end{example}

\subsection{Some Mechanisms and Principles}
\label{sec:back:mecha}
\subsubsection{Hidden Data}
\label{sec:back:mecha:hidden}
On the one hand, it seems that we can store an infinite amount of data with a single qubit (since there are infinitely many choices for $\alpha$ and $\beta$). On the other hand,
most of ``stored data'' is lost as soon as one wants to reads it through a measurement.
However, the essential point of quantum mechanisms is that when one applies transformations on those qubits, the continuous variables $\alpha, \beta$ do have an impact on the resulting qubits. Therefore, as long as the system is not disturbed by any observation or measurement, the dynamics keeps track of those (infinite) hidden pieces of information. This is an essential property that can be exploited to make the different alternatives that may be encoded in qubits
interfere with each other. By contrast with the classical world where alternatives exclude each other (in, \eg probabilistic programming), alternatives in the quantum world
co-exist and evolve at the same time, simultaneously. The {\em interference} mechanism can be used to effectively exploit hidden information to
outperform classical computations (see for example the super-dense coding mechanism in Section~\ref{sec:proto:coding} or Deutsch\--Jozsa algorithm presented in~\cite{nielsen2002quantum}).
Interestingly, when the number of qubits grows, the amount of hidden information one can
effectively exploits (through a measurement done after a couple of transformations) may grow exponentially
({\em parallelization} principle).

How this can be leveraged in practice is the primarily focus of {\em quantum computation}
but is unimportant to this paper.


\subsubsection{Distinguishing Quantum States}
Consider the following game: Alice and Bob agree on a set of quantum states $\{\phi_i\}$, Alice secretly chooses
some $\phi_i$ and sends it to Bob, can Bob identify the index~$i$?
As shown in Example~\ref{ex:measure}, Bob has a winning strategy if the pre-defined set of states is actually
the computational base (Bob just has to measure the unknown state with $\ket{0}\bra{0}$ for instance).
More generally, a similar reasoning shows that this still holds for any orthonormal basis.

What about other sets? Sadly, an impossibility result shows that there cannot be any quantum measurement capable of
distinguishing a state from a non-orthonormal set of states.
\begin{theorem}[from \cite{nielsen2002quantum}]
  Let $H$ be an Hilbert space.
  There is no measurement distinguishing two non-orthogonal states.
  \label{thm:dist}
\end{theorem}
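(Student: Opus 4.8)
The plan is to argue by contradiction. First I would fix what it means for a measurement $\{M_m\}$ to \emph{distinguish} two states $\ket{\psi_1}$ and $\ket{\psi_2}$: there is a partition of the outcome indices into two sets $S_1, S_2$ such that, whenever the input is $\ket{\psi_i}$, the observed outcome lies in $S_i$ with probability~$1$. Setting $\mathcal{E}_i \deff \sum_{m \in S_i} M_m\dag M_m$, the completeness equation of the measurement postulate gives $\mathcal{E}_1 + \mathcal{E}_2 = I$, each $\mathcal{E}_i$ is positive semidefinite (being a sum of operators $M_m\dag M_m$), and the perfect-discrimination hypothesis reads $\expval{\mathcal{E}_1}{\psi_1} = 1$ and $\expval{\mathcal{E}_2}{\psi_2} = 1$.

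Next I would combine $\expval{\mathcal{E}_1}{\psi_1} = 1$ with $\mathcal{E}_2 = I - \mathcal{E}_1$ to get $\expval{\mathcal{E}_2}{\psi_1} = 0$; since $\mathcal{E}_2$ is positive it admits a positive square root and $\expval{\mathcal{E}_2}{\psi_1} = \norm{\sqrt{\mathcal{E}_2}\ket{\psi_1}}^2$, hence $\sqrt{\mathcal{E}_2}\ket{\psi_1} = 0$. Then comes the key step, which is where non-orthogonality enters. I would write $\ket{\psi_2} = \alpha\ket{\psi_1} + \beta\ket{\phi}$ with $\alpha = \braket{\psi_1}{\psi_2}$, $\ket{\phi}$ a unit vector orthogonal to $\ket{\psi_1}$, and $\abs{\alpha}^2 + \abs{\beta}^2 = 1$. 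Because $\ket{\psi_1}$ and $\ket{\psi_2}$ are not orthogonal we have $\alpha \neq 0$, so $\abs{\beta}^2 < 1$. Applying $\sqrt{\mathcal{E}_2}$ annihilates the $\ket{\psi_1}$ component, so $\sqrt{\mathcal{E}_2}\ket{\psi_2} = \beta\,\sqrt{\mathcal{E}_2}\ket{\phi}$ and therefore $\expval{\mathcal{E}_2}{\psi_2} = \abs{\beta}^2\,\expval{\mathcal{E}_2}{\phi} \le \abs{\beta}^2 < 1$, where the middle inequality uses $\mathcal{E}_2 = I - \mathcal{E}_1 \le I$ together with $\braket{\phi}{\phi} = 1$. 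This contradicts $\expval{\mathcal{E}_2}{\psi_2} = 1$, which finishes the argument.

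The operator-theoretic ingredients — existence of a positive square root, the identity $\expval{A}{v} = \norm{\sqrt{A}\ket{v}}^2$ for positive $A$, and $\mathcal{E}_2 \le I$ — are routine. The one point that needs care is the very first step: one must commit to the convention that ``distinguishing'' means \emph{perfect} (probability-one) discrimination, and also observe that allowing arbitrarily many groups of outcomes rather than exactly two changes nothing, since one can relabel each outcome by which of $\ket{\psi_1}$ or $\ket{\psi_2}$ it is meant to certify and merge the groups accordingly. I expect this modeling step, rather than the computation, to be the main obstacle to stating the theorem cleanly; once it is pinned down, the proof is short.
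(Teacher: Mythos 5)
Your argument is correct, and it is essentially the same proof the paper relies on: the paper does not prove the theorem itself but defers to Box~2.3 of the cited reference, and the argument there is exactly your reduction to the operators $\mathcal{E}_i$, the deduction $\sqrt{\mathcal{E}_2}\ket{\psi_1}=0$, and the decomposition $\ket{\psi_2}=\alpha\ket{\psi_1}+\beta\ket{\phi}$ yielding $\expval{\mathcal{E}_2}{\psi_2}\le\abs{\beta}^2<1$. Your preliminary remark about pinning down ``distinguishing'' as probability-one discrimination (and merging outcome groups into two) matches the convention used there as well.
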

\begin{proof}
See Box 2.3 on page 87 in~\cite{nielsen2002quantum}.  
\end{proof}

\begin{example}
  Consider the set $\{\ket{0},\ket{1},\ket{+}\}$. There is no quantum measurement $\{M_m\}$ that can identify
  with probability ${\ge}\frac{1}{2}$
  the state $\ket{0}$ (or the state $\ket{+}$) from the others.
\end{example}


\subsubsection{No-Cloning Theorem}
The no-cloning theorem states that there is no way to copy a qubit in the general case.
Intuitively, a transformation $U$ would be able to copy a state $\ket{\phi}$ if
it satisfied
$$U(\ket{\phi}\otimes\ket{e}) = \ket{\phi}\otimes\ket{\phi}\quad \text{(up to a global phase factor we omit)}$$
for some state $\ket{e}$.
As stated below,
there is no generic way to copy qubits that works for two different, non-orthogonal qubits.

\begin{theorem}[from \cite{nielsen2002quantum}]
  Let $H$ be an Hilbert space.
  There is no unitary transformation $U:H\mapsto H$ such that
  there exists different and non-orthogonal $\ket{\phi},\ket{\psi}\in H$ satisfying
  $U(\ket{\phi}\otimes\ket{e}) = \ket{\phi}\otimes\ket{\phi}$
  and
  $U(\ket{\psi}\otimes\ket{e}) = \ket{\psi}\otimes\ket{\psi}$.  
  \label{thm:cloning}
\end{theorem}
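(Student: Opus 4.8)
The plan is to prove the contrapositive by a short linearity-plus-inner-product argument, the same trick that underlies the proof of Theorem~\ref{thm:dist}. Suppose, for contradiction, that such a unitary $U$ exists together with distinct, non-orthogonal states $\ket{\phi},\ket{\psi}\in H$ and an ancilla $\ket{e}$ with $U(\ket{\phi}\otimes\ket{e}) = \ket{\phi}\otimes\ket{\phi}$ and $U(\ket{\psi}\otimes\ket{e}) = \ket{\psi}\otimes\ket{\psi}$. The key step is to compute the inner product $\bra{\phi}\ket{\psi}$ in two different ways and compare.

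First I would use the fact that $U$ is unitary, hence inner-product preserving, on the two input vectors $\ket{\phi}\otimes\ket{e}$ and $\ket{\psi}\otimes\ket{e}$: since $\bra{e}\ket{e}=1$, the input inner product is $(\bra{\phi}\ket{\psi})\cdot(\bra{e}\ket{e}) = \bra{\phi}\ket{\psi}$. On the other hand, applying $U$ and using the multiplicativity of inner products over tensor products, the output inner product is $(\bra{\phi}\ket{\psi})\cdot(\bra{\phi}\ket{\psi}) = (\bra{\phi}\ket{\psi})^2$. Equating the two gives $\bra{\phi}\ket{\psi} = (\bra{\phi}\ket{\psi})^2$, so, writing $z \deff \bra{\phi}\ket{\psi}$, we have $z = z^2$, whence $z \in \{0,1\}$. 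The case $z=0$ contradicts non-orthogonality of $\ket{\phi}$ and $\ket{\psi}$; the case $z=1$ forces $\ket{\phi}=\ket{\psi}$ (equality in Cauchy--Schwarz for unit vectors), contradicting that they are distinct. Either way we reach a contradiction, so no such $U$ can exist.

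The only genuinely delicate point — and the one I would be careful to state explicitly rather than grind through — is the identity $\bra{\phi_1\otimes\phi_2}\ket{\psi_1\otimes\psi_2} = (\bra{\phi_1}\ket{\psi_1})(\bra{\phi_2}\ket{\psi_2})$ for the inner product on the tensor product Hilbert space; this is the defining property of the tensor-product inner product and is what lets the ancilla drop out on the input side and the square appear on the output side. Everything else is one line of algebra over $\mathbb{C}$. There is no real obstacle here; the proof is genuinely short, and for a paper of this kind I would simply present the three-line computation inline rather than deferring to the textbook, since (unlike Theorem~\ref{thm:dist}) it is self-contained.
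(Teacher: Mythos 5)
Your proposal is correct and follows essentially the same route as the paper's own proof: take the inner product of the two cloning equations, use $U^\dagger U = I$ and $\bra{e}\ket{e}=1$ to obtain $\bra{\phi}\ket{\psi} = \bra{\phi}\ket{\psi}^2$, and conclude that the states must be orthogonal or equal. Your added remark that the case $\bra{\phi}\ket{\psi}=1$ forces $\ket{\phi}=\ket{\psi}$ by equality in Cauchy--Schwarz is a small but welcome explicit justification of a step the paper leaves implicit.
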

\begin{proof}
  Assume the existence of $U$, $\ket{e}$, $\ket{\phi}$, and $\ket{\psi}$ such that
  $U(\ket{\phi}\otimes\ket{e}) = \ket{\phi}\otimes\ket{\phi}$ and
  $U(\ket{\psi}\otimes\ket{e}) = \ket{\psi}\otimes\ket{\phi}$ hold.
  By taking the inner product of both equations, we obtain:
  $$
  ((U(\ket{\phi}\otimes\ket{e}))\dag, U(\ket{\psi}\otimes\ket{e}))
  =
  ((\ket{\phi}\otimes\ket{\phi})\dag,
  \ket{\psi}\otimes\ket{\psi}).
  $$
  Since $U\dag U=I$, we obtain
  $$
  \bra{\phi}\ket{\psi}\cdot\bra{e}\ket{e}
  =
  \bra{\phi}\ket{\psi}\cdot\bra{\phi}\ket{\psi}
  $$
  Since $\bra{e}\ket{e}=\abs{e}^1=1$, we obtain
  $\bra{\phi}\ket{\psi} = \bra{\phi}\ket{\psi}^2$ in $\mathbb{C}$. Therefore,
  either $\bra{\phi}\ket{\psi}=0$ (\ie $\ket{\phi}$ and $\ket{\psi}$
  are orthogonal) or
  $\bra{\phi}\ket{\psi}=1$ (\ie $\ket{\phi}=\ket{\psi}$).
\end{proof}
Conversely, if $\ket{\phi},\ket{\psi}\in H$ are orthogonal or equal, one can trivially build an appropriate $U$.
For the former case, one can measure the state to be copied in an orthonormal basis of $H$ whose two vectors are
$\ket{\phi}$ and $\ket{\psi}$ (using the Gram-Schmidt algorithm for instance) in order to perfectly distinguish between the two possible states given as input.
Once this is known, one can just build the appropriate qubit.
In the latter case, it suffices to let $U$ be the constant function that always returns
$\ket{\phi}\otimes\ket{\phi}$.

\subsubsection{Bell States and EPR Experiment}
\label{sec:back:mecha:epr}
\newcommand{\Bell}{\ket{\Phi^-}}
Contrary to the classical world, quantum states such as qubits do not have physical properties that exist independent of observation. Physical properties rather arise as a consequence of measurements performed upon the system. Let us see how this principle
has been experimentally validated by the {\em anti-correlations in the EPR experiment}\footnote{Historically,
  the first experiment designed to test (actually to try to invalidate) this theory
  was the EPR experiment that violates the Bell inequality.
  It has been then experimentally shown that the Bell inequality is actually not obeyed by Nature.
}.
Consider the following state, called {\em Bell state} (Appendix~\ref{ap:Bell} describes how one can obtain such a state):
$$
\Bell = \frac{\ket{01} - \ket{10}}{\sqrt{2}}.
$$
Note that such a state would not exist if the states of the union of two qubits were taken from the sum product; instead of
the tensor product (see Postulate 4). Such states that cannot be written $\ket{\phi_A}\otimes\ket{\phi_B}$
for two sub-systems $A$ and $B$ are called {\em entangled states}.

The Bell state is made of two intricated but physically disjoint qubits that can be given to Alice and Bob, who may be extremely far away from each other.
If Alice measures her own bit in the computational basis (\ie using $\ket{0}\bra{0}$ or $\ket{1}\bra{1}$),
she observes 0 with probability $\frac{1}{2}$, so does Bob. However, once one of the two has measured his own qubit,
the resulting intricated system is fixed: either $\ket{01}$ or $\ket{10}$ (global phases do not impact future observations and can be omitted). Let's suppose Alice and Bob measure almost instantly their qubits (we assume the time discrepancy between both measurements is strictly lower than $\frac{d}{c}$, where $d$ is the distance between them and $c$ is the speed of light). By symmetry, we can assume Alice measures first.
Let us explore possibilities:
\begin{itemize}
\item Alice measures
$0$ with probability $\frac{1}{2}$ and the resulting intricated system is ${\ket{01}}$. Therefore, Bob then measures $1$ with
probability 1.
\item Alice measures
$1$ with probability $\frac{1}{2}$ and the resulting intricated system is ${\ket{10}}$. Therefore, Bob then measures $0$ with
probability 1.
\end{itemize}
Bob thus always measures the opposite bit of Alice's measured bit.
This gives the ability to Alice and Bob to both flip a coin and agree on the resulting outcome, without exchanging any information.
This is something that cannot be done in the classical world.
Examples of applications of this idea are given in \Cref{sec:proto:comm,sec:proto:coding,sec:proto:tele}.




\section{Quantum Protocols}
\label{sec:proto}
We now describe how the new quantum mechanisms can be leveraged to devise new quantum protocols and notably
security protocols that are more secure than classical protocols (\Cref{sec:proto:BB84,sec:proto:comm}).
We also present two classical applications of quantum mechanisms in \Cref{sec:proto:coding,sec:proto:tele}.


\begin{figure}[t]
  \centering
\begin{verbatim}
 - Input: security parameters n, D.
 - Security: The scheme succeeds in establishing a shared key of length n with
             probability at least 1 - O(2^{-D}).
             The schemes ensures that Eve's mutual information with the  final key
             is less than 2^{-n}.

[Alice]        : Chooses s=(4+D)n random bits (d_i)
                 chooses s random bases (b_i) in {[+],[x]}
                 for all i=1..s, c_i = encoding of d_i in b_i
 Alice -> Bob  : (c_i) for i=1..s

[Bob]          : Chooses s random bases (b'_i) in {[+],[x]}
                 for all i=1..s, d'_i = measuring c_i in b'_i
 Bob -> Alice  : Done

[Alice] -> Bob : (b_i) for i=1..s

[Bob]          : Computes indices j_1,..,j_N such that b_{j_i} = b'_{j_i}.
                 On average, N is close to 2n + δn/2.
                 If N < 2n, abort the protocol.
 Bob -> Alice  : (j_i) for i=1..N

[Alice]        : Selects at random n indices k_1,..,k_n among the (j_i).
 Alice -> Bob  : (j_i), (d_{j_i}) for i=1..n

[Bob]          : If d_{j_i} != d'_{j_i} for some i, abort. Otherwise: success.

[ Not described here: Information reconciliation + Privacy amplification on the n shared
  bits to obtain m>n shared key bits. ]
\end{verbatim}
  \caption{Description of the BB84 protocol}
  \label{fig:BB84}
\end{figure}

\subsection{Quantum Key Distribution: BB84}
\label{sec:proto:BB84}
The BB84 protocol~\cite{bennett2014quantum} aims at securely distributing a session key from Alice to Bob. 
The main quantum mechanism leveraged by this protocol is the no-cloning theorem that can be used to detect
the presence of an eavesdropper.

Essentially, Alice sends $s$ qubits to Bob, each encoded either in the rectilinear basis (denoted by $[+]$) or in
the diagonal basis (denoted by $[\times]$) chosen at random. The attacker may intercept those qubits but does not learn anything: neither the value of the bits nor the bases (see Theorem~\ref{thm:dist}). Bob then measures those bits in either bases chosen at random.
On average, half those bases match Alice's choices of basis. Therefore, for those half measurements,
Bob will observe the same bits that the ones Alice has encoded and for the others, he will obtain completely random bits.
On a classical channel, they communicate indices of matching bases. Alice then use a subset of those to test
the presence of the attacker by revealing the corresponding initial secret bits. If Bob receives enough
bits that match his own measurements, both consider that enough qubits have not been eavesdropped on.
The rationale is that when the intruder eavesdropped on exchanged qubits, he cannot both measure them and keep it intact for Bob
(Theorem~\ref{thm:cloning}). Furthermore, when measuring, the intruder cannot learn both the encoded bit and the base;
that would contradict Theorem~\ref{thm:cloning} as well.
Therefore, eavesdropping on qubits introduces mismatches between verification bits sent by Alice and the ones
measured by Bob.
The shared key is then made of the matching and unrevealed bits that are shared between Alice and Bob.
\smallskip{}

A detailed presentation of the scheme is presented in Figure~\ref{fig:BB84}.

\paragraph{Attack Scenarios.}
Note that if all the channels are assumed to be under the intruder's control, the intruder
can perform a full Man-in-the-Middle (MitM) attack and defeat agreement and secrecy properties on the session key:
pretending to be Bob to Alice and pretending to be Alice to Bob. If only the \texttt{Done} is not authenticated or not replay protected in case of multiple sessions,
the attacker can still perform a MitM attack.
Both attacks are presented below.
\begin{attack}[Full Man-in-the-Middle attack]
  A MitM attacker can pretend to be Bob to Alice and Alice to Bob. He just has to emulate both roles.
  In doing so, he learns, on average, half matching $d_i$ with each honest agent.
  When he has completed both runs, he establishes a shared key with Alice and another one with Bob.
  Those shared keys are not secret while Alice and Bob think they have established a shared secret key between each other.
  
  This is a known attack that we were able to find automatically (see Section~\ref{sec:verif}).
\label{at:MiM}
\end{attack}
\begin{attack}[]
When the \texttt{Done} message is neither authentic nor replay-protected, the attacker can pretend to Alice that Bob
has received all the qubits while the attacker is still keeping them untouched. This allows him to read all the qubits
in Alice's bases and obtain all the $d_i$ that Alice has picked randomly and encoded in the qubits.
Therefore, the attacker can forge identical qubits by encoding again the same $d_i$ in Alice's bases
and sending them to Bob. The intruder can then let Alice and Bob exchange (authenticated) messages as expected by staying passive.
The final key established by Alice and Bob is completely known to the attacker since he knows all the $d_i$.
  
This is reminiscent to the Attack~\ref{at:MiM}, but we have discovered it when attempting to automatically verify the protocol
using Tamarin (see Section~\ref{sec:verif}).
\label{at:done}
\end{attack}

A different way to exploit the lack of authenticity consists in exploiting EPR attacks to learn all the bits that
Bob has measured (\ie the $d'_i$). This attack is explained below.
\begin{attack}
  We assume here that the \texttt{Done} message is authentic or another mechanism is in place to ensure that Alice
  sends her bases only after Bob has received and measured all expected qubits.
  Furthermore, we assume that all messages sent on the classical channel are authenticated except the last
  message (Alice sending verification bits). Previous attacks are no longer valid but a different attack
  still breaks secrecy as explained next.
  
  A MitM attacker could intercept the qubits sent by Alice and drop them.
  Instead, the attacker could forge as much EPR pairs as qubits Alice sent, and send
  one share of each pair to Bob.
  The intruder then lets the message containing Alice's bases go through and then eavesdropps on Bob's bases.
  He can now measure his half of each EPR in Bob's basis and therefore obtain the same bit that Bob has obtained.
  At this point, the intruder knows all Bob's bases and all the bits Bob has measured\footnote{Note that, even in Attack~\ref{at:MiM} or Attack~\ref{at:done}, the attacker was not able to gain as much information.}.
  Therefore, he can easily choose appropriate verification bits and send those to Bob who will think he has
  established a secret key with Alice while it is known to the intruder.

  We have discovered the attack using Tamarin (see Section~\ref{sec:verif}).  
\label{at:EPR}
\end{attack}

We think that assuming the quantum channel to be authentic is unrealistic.
The same applies to the classical channel.
We think that authenticity should be provided by cryptography. For instance, signatures could be used
to sign all messages sent over classical channels. Even if the signature primitive is not quantum-resistant,
the resulting scheme would still provide unconditional perfect forward secrecy. Which is still better than
all classical schemes, excluding impractical OTP and the like.
The analysis we eventually conduct allows us to identify minimal authenticity requirements.

\begin{figure}[t]
  \centering
\begin{verbatim}
 - Input: a bit b Alice wants to commit on, security parameters n.
 - Security: Bob or Eve do not learn b before Alice unveil b.
             If Bob thinks Alice has commited on b' at the end of the unveiling
             procedure then b=b'.

----- Commit Procedure
[Alice]        : Chooses n random bits (d_i)
                 base=[+] if b=0 and base=[X] if b=1
                 for all i=1..n, c_i = encoding of d_i in base
 Alice -> Bob  : (c_i) for i=1..n

[Bob]          : Chooses s random bases (b'_i) in {[+],[x]}
                 for all i=1..n, d'_i = measuring c_i in b'_i

----- Unveil Procedure
[Alice] -> Bob : b, base, (d_i) for i=1..n

[Bob]          : Computes indices j_1,..,j_N such that b'_{j_i} = base
                 On average, N is close to n/2
                 If N < threshold, abort the protocol.
                 If d'_{j_i}=d_{j_i} for i=1..N then Bob assumes Alice has commited
                 on b.
\end{verbatim}
  \caption{Description of the bit commitment protocol derived from the BB84 protocol}
  \label{fig:commitment}
\end{figure}

\subsection{Bit-commitment Protocol}
\label{sec:proto:comm}
The BB84 scheme can be adapted into a bit commitment protocol~\cite{bruss2007quantum}.
Essentially, instead of randomly choosing bases to encode qubits, Alice chooses an uniform basis
that depends on the bit she wants to commit on.
Bob still measures in random bases.
When Alice wants to unveil the bit and prove her commitment, she publishes the base she used with the bits she
encoded in the qubits. Bob can check that, for the bases that match Alice's ones, sufficiently many measured bits match with 
the bits Alice pretended to have encoded.
\smallskip{}

A detailed presentation of the scheme is presented in Figure~\ref{fig:commitment}.
Contrary to BB84 QKD, authenticity of messages is not required.

\paragraph{Attack.}
A well known attack defeating the main security goal of the protocol exploits EPR pairs forged by Alice to cheat on Bob.
\begin{attack}
  An intruder could convince Bob that he has committed on a bit $b$ while he has not actually committed on any value.
  To do so, the intruder can proceed as follows:
  \begin{enumerate}
  \item The intruder creates $n$ EPR pairs and sends one half of each to Bob. Hence, the intruder does not commit on any value.
  \item Bob measures those qubits in random bases $(b_i)_i$ and obtain bits $(d_i)_i$.
  \item If the Intruder would like unveil a value $b$ and convince Bob he has comitted on it at step 1. (before sending the qubits), then he measures
    all his shares in the appropriate base ($\mathrm{base}=[+]$ when $b=0$ and $\mathrm{base}=[\times]$ when $b=1$) and obtains the bits $(d'_i)_i$.
    The intruder then sends to Bob the values $b$, $\mathrm{base}$, and $(\lnot d'_i)_i$, where $\lnot$ denotes the negation.
  \item For the bases $b_i=\mathrm{base}$, it holds that $d_i=\lnot d'_i$ (see Section~\ref{sec:back:mecha:epr}). Therefore, Bob assumes that the Intruder has committed on $b$ at step 1.
  \end{enumerate}
  This is a known attack that we were able to find automatically (see Section~\ref{sec:verif}).
\label{at:EPR-bit}
\end{attack}

\subsection{Other Quantum Protocols}
\label{sec:proto:other}
We now briefly present other important applications of quantum mechanisms that may be useful to better understand those mechanisms but that are unimportant for the rest of the paper.

\subsubsection{Super Dense Coding}
\label{sec:proto:coding}
{\em Setup}: same setup as the EPR experiment (see Section~\ref{sec:back:mecha}). So Alice and Bob have their own qubit that are intricated in a Bell state $\ket{\Psi^-}$.

\noindent{\em Goal}: Alice wants to send two bits of information but can send only one qubit.

\noindent{\em Idea}: depending on $b_0,b_1\in\mathbb{B}$, Alice transforms the Bell state $\ket{\Phi^-}$ into one of the four
states in the Bell basis (see Appendix~\ref{ap:Bell}). She can do that by only interacting with her half of the EPR pair.
Alice then sends her qubit to Bob who can measure the two intricated qubits
in the Bell basis and determine with probability 1 what was the intricated state and therefore the bits $b_0,b_1$.

\subsubsection{Quantum Teleportation}
\label{sec:proto:tele}
{\em Setup}: same setup as the EPR experiment (see Section~\ref{sec:back:mecha}).

\noindent{\em Goal}: Alice wants to send a qubit $\ket{\phi}$ to Bob but can only send classical information to Bob.
Sending a qubit means that Bob should have an identical qubit (with exactly the same continuous variables) at his disposal.

\noindent{\em Idea}: Alice will interact her qubit $\ket{\phi}$ with the qubit corresponding to her half of the EPR pair. Alice then measures both qubits at her disposal (\ie $\ket{\phi}$ and her share of the EPR pair) and sends the results to Bob. Based on those two classical bits, Bob will perform one of four appropriate transformations on his half of the EPR pair. By appropriately choosing the transformations made by Alice and Bob, it is possible to
make so that the resulting qubit Bob obtains at the end is exactly $\ket{\phi}$; \ie the initial qubit Alice wanted to send to Bob.
Note that, in the process, Alice loses her qubit $\ket{\phi}$ so this does not violate the no-cloning principle.


\section{Quantum Dolev-Yao Intruder}
\label{sec:DY}
After recalling basic notions about the regular, classical Dolev-Yao attacker and
symbolic models (Section~\ref{sec:DY:regular}), we explain how we extend it with
probabilities and intruder's guessing capabilities (Section~\ref{sec:DY:proba}),
and with intruder's quantum capabilities and control over quantum channels (Section~\ref{sec:DY:qu}).

\begin{figure}[tb]
  \centering
  \includegraphics[width=1\textwidth,page=1]{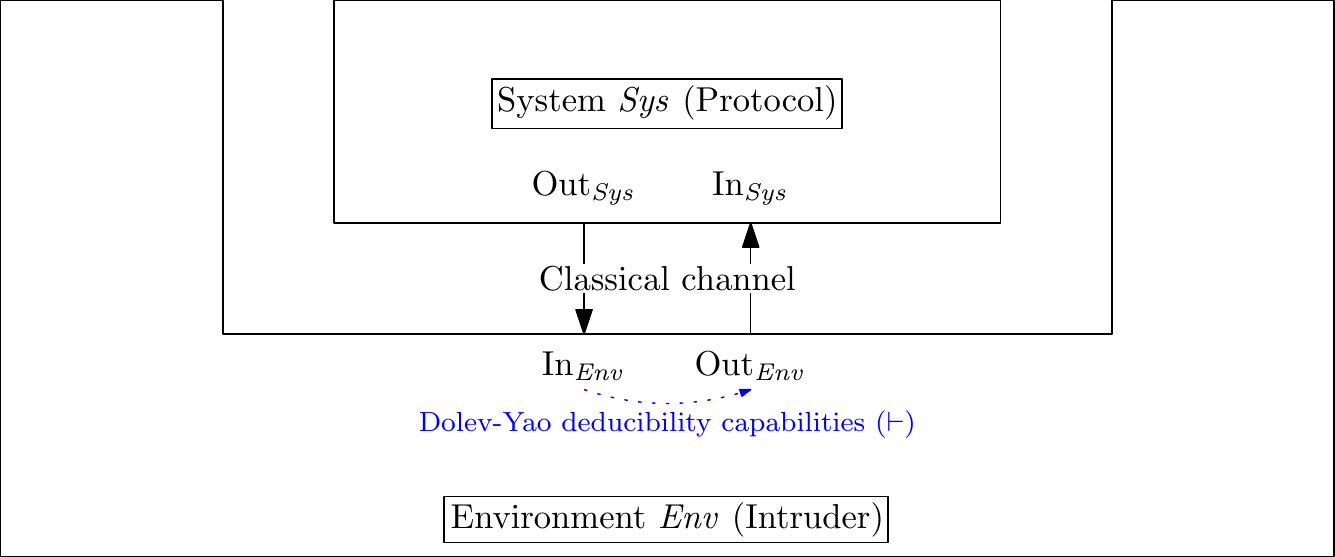}
  \caption{Regular Dolev-Yao Intruder}
  \label{fig:dolevYao:regular}
\end{figure}%

\subsection{Symbolic Model and Classical Dolev-Yao Intruder}
\label{sec:DY:regular}
\newcommand{\terms}{\mathcal{T}_\Sigma(\mathcal{X}\cup\mathcal{N})}
Figure~\ref{fig:dolevYao:regular} depicts a high-level representation of the
usual Dolev-Yao attacker who has complete
control over the classical channels (he learns all outputs and chooses all inputs)
and has deduction capabilities (he can compute new messages from the messages he already knows).

Exchanged messages are represented by terms in a pre-defined term algebra $\terms$,
that is a $\Sigma$-algebra on a set of variables $\mathcal X$ and names $\mathcal{N}$
for a pre-defined set of function symbols $\Sigma$.
A semantics can be given to those symbols through equational theories or reduction rules.
The deduction capabilities can then be expressed in a formal way,
for example through deductive systems such as natural deduction.
For instance, Figure~\ref{fig:DY-nat} 
depicts intruder's deductive capabilities for symmetric encryption and pairing
whose functions symbols are
$\Sigma=\{\senc(\cdot,\cdot),\pair{\cdot}{\cdot}\}$.
Note that function symbols of arity 0 are public constants.

\begin{figure}[tbh]
  \centering
$  \begin{array}{ccc}
  \infer[]{\Gamma\vdash m}{\Gamma\vdash\senc(m,k)&\Gamma\vdash k}\quad\quad
  \infer[i\in\{1,2\}]{\Gamma\vdash m_i}{\Gamma\vdash \pair{m_1}{m_2}}\\[10pt]
  \infer[f\in\Sigma]{f(m_1,\ldots,m_n)}{\Gamma\vdash m_1,\ldots, \Gamma\vdash m_n}\quad\quad
  \infer[]{\Gamma\vdash m}{m\in\Gamma} \quad\quad
  \end{array}$

  \caption{Example of deductive system for representing Dolev-Yao intruder's deduction
    capabilities.
    The intruder's knowledge is given by the set of terms $\Gamma$.
    $\Gamma\vdash m$ denotes that the intruder can deduce $m$ from $\Gamma$ and a proof
    thereof is a {\em deduction tree} with $\Gamma\vdash m$ as a conclusion (at the bottom).
  }
  \label{fig:DY-nat}
\end{figure}

A formal semantics can be given to security protocols by considering the interleaving semantics
for the different agents of the protocol with the intruder, communicating through classical channels
(see Figure~\ref{fig:dolevYao:regular}) and starting with an initial intruder's knowledge $\Gamma_0$
containing public values. In such a model, the intruder learns all the protocol outputs
and can choose all protocol inputs provided that he can deduce such terms (\ie he can choose
the term $M$ as input, provided that $\Gamma\vdash M$ for $\Gamma$ the union of the set of previous protocol outputs
and $\Gamma_0$).

In the subsequent sections, we introduce additional intruder capabilities:
guessing capabilities and a constrained control over the quantum channel.
The resulting quantum Dolev-Yao attacker is depicted in Figure~\ref{fig:dolevYao:quantum}.
Our goal with this quantum symbolic attacker is to model quantum protocols
such as QKD and quantum bit commitment protocols precisely enough to capture
interesting attacks such as those given in \Cref{sec:proto:BB84,sec:proto:comm}
but abstractly enough to fully automatically verify them.
\bigskip{}

\subsection{Probabilities and Guessing Capabilities}
\label{sec:DY:proba}
Symbolic models and verification methods based thereon usually do not account for
probabilities. For instance, the creation of a nonce $R$ (a large random number) by a certain role
is usually modeled as the creation of a fresh free name $n\in\mathcal{N}$.
Moreover, the security parameters such as key lengths
are usually abstracted away since those notions have no counterpart in a term algebra.

Such known limitations of symbolic modeling and verification raise two important challenges
when modeling quantum protocols such as QKD or QBC:
\begin{enumerate}
\item Quantum protocols are essentially {\em probabilistic} distributed programs.
  The protocol logic itself includes probabilistic reasoning.
  For instance for BB84 QKD, Bob picks random bases
  and then aborts if there is not enough bases matching Alice's ones.
  This is to be contrasted with most of classical security protocols that are essentially
  non-probabilistic distributed programs based on probabilistic cryptographic primitives,
  that are abstracted away in the symbolic model anyway.
  Furthermore, contrary to classical security protocols, the security parameters in quantum protocols
  not only impact the cryptographic messages, but also impact the protocol logic (\eg number of qubits\footnote{We shall
    see that qubits need to be modeled as separate terms in the symbolic model in order to capture algebraic
    properties of transformations on them.}
  Alice shall send in BB84).
  Finally, those parameters also impact the security goals (\eg number of shared bits that are secret).
  How can symbolic models take such probabilistic protocol logic and security parameters into account?

\item Most messages that are exchanged and manipulated are random bits.
  On the one hand, those values are chosen randomly and are supposed to be private, not known to the attacker.
  On the other hand, the attacker is expected to correctly guess some of them, actually half of them on average.
  For instance, in BB84 QKD, the event for which the attacker guesses half of the bits $b_i$ happens with
  probability $\frac{1}{2}$.
  This intruder capability of guessing some bits is crucial for not missing basic attacks
  such as MitM attacks (\eg Attack~\ref{at:MiM}). Indeed, since the different honest roles are essentially probabilistic
  programs, the attacker needs such capabilities to emulate them.
  How can the symbolic model and the Dolev-Yao intruder be adapted to account for such probabilistic guessing
  capabilities?
\end{enumerate}

We address both problems using abstractions and approximations we describe in the following sections.

\subsubsection{Exploring Possibilities rather than Probabilities}
Instead of reasoning with probabilities by taking into account probability distributions
of the different events in the protocol semantics,
we consider {\em possibilities} via fixed samples in the sample space, that could happen with high probability.
Similarly, our modeling is not parametric in the security parameters (\eg $n$ in BB84 QKD).
We only model and analyze the protocol for fixed, small values of those parameters.

For simplicity, we assume in the following that sample sets are in bijection with $\mathbb{B}$,
such as bits $(d_i)_i$ in $\mathbb{B}$ or bases $(b_i)_i$ in $\{[+],[\times]\}$. All values
from those sample sets are taken from public terms, \ie function symbols of arity 0.

\begin{abstraction}
  Our modeling and analyses consider fixed security parameters and bitstring lengths.
  Security goals are instantiated for those fixed security parameters.
  Bitstrings picked at random by honest agents are replaced by fixed bitstrings
  that correspond to highly probable events.
  We typically consider bitstrings with the same amount of zeros and ones.
  \label{abs:fix}
\end{abstraction}

Even though it seems that we only consider fixed scenarios, we still consider an adversarial
environment. Therefore, from the parameters and samples initially fixed, a multitude of (often infinitely many)
executions are still possible. In particular, the attacker can actively deviate from the expected,
honest execution.

\begin{example}
  For the BB84 protocol, we may be interested in the case $n=4$
  for which two of Alice's and Bob's bases match.
  This leads to one verification bit and one key bit, which is enough
  to explore many interesting execution traces of the protocol
  and to capture the aforementioned attacks.
  A concrete scenario along those lines could be:
\begin{verbatim}
------------ Fixed Samples -----------------
  Alice chooses      : b1 =[+], b2 =[x], b3 =[+], b4=[x]
                     : d1=0, d2=1, d3=1, d4=0
  Bob chooses        : b1'=[+], b2'=[+], b3'=[x], b4'=[x]
------------ In Honest Execution:-----------
  Matching bases     : b1, b4 in honest execution
  Verification bases : second of matching base, (b4 in honest execution)
  Final key bits     : remaining of matching bases, (d1 in honest execution)
\end{verbatim}
  The bits $(d_i)$ are arbitrarily fixed but the particular values we have chosen here
  do not matter because, due to other abstractions we describe later,
  we will eventually model them as pairwise indistinguishable terms (see Section~\ref{sec:DY:proba:name}).
  In contrast, the values chosen for $(b_i)$ and $(b_i')$ will have an impact. We have chosen here
  high-probability bitstrings (\eg there are two matching bases on average).
  
  Note that the verification base and the final key bit is not fixed since they depend on the adversarial environment:
  Alice's and Bob's view on the execution that is currently happening may not match.
  For instance, an attacker could tamper with the bases sent by Bob to Alice in order to make Alice
  compute different matching bases.
\label{ex:fix}
\end{example}

\subsubsection{Random Bits as Annotated Private Names}
\label{sec:DY:proba:name}
\newcommand{\sample}{\mathcal{S}}
\newcommand{\sampleN}{\mathcal{S}_\mathcal{N}}
\newcommand{\samples}{\mathbb{S}_\mathcal{N}}
\newcommand{\model}[1]{\llbracket #1 \rrbracket}
\looseness=-1
The previous design choice (Abstraction~\ref{abs:fix}) introduces an extra issue: as the fixed bitstrings are taken
from public values (\eg $(d_i)_i\in\mathbb{B}^n$ for BB84 QKD),
then the attacker can ``guess'' all those bits with probability~1.
This is not surprising as agents no longer pick those bits at random, allowing the attacker to adopt bitstring-specific
winning strategies.
More formally, the problem is that such fixed bitstrings are already in the initial intruder's knowledge $\Gamma_{0}$.
To remove this unrealistic attacker's capability, we shall take fixed bitstrings from private
values outside of $\Gamma_0$, \eg names in $\N$. However, for reasons that will become clear later, we shall also
keep track of the role that picked a certain bit,
the bitstring from which it was taken, its position in the bitstring, and the value we have chosen for it.

\begin{definition}[Fixed bitstrings modeling]
  We model a bit that is picked at random
  by an honest party $\mathrm{role}$ by a term
  $\bit(\mathtt{seed}, \mathrm{bitstring}, \mathrm{position}, \mathrm{role}, \mathrm{value})$
  where:
  \begin{itemize}
  \item $\bit(\cdot,\cdot,\cdot,\cdot,\cdot)\in\Sigma$ is a new function symbol,
  \item $\mathtt{seed}\in \N$ is a new name (hence secret) uniform for all bits and bitstrings,
  \item $\mathrm{bitstring}$ is a public constant that denotes the bitstring from which this bit is taken,
  \item $\mathrm{position}$ is a public constant in $\mathbb{N}$ that denotes the position of this bit in the underlying bitstring,
  \item $\mathrm{value}$ is a public constant that denotes the value of the bit in the fixed bitstring under consideration.  
  \end{itemize}
  \label{def:name}
\end{definition}

\begin{example}[Continuing Example~\ref{ex:fix}]
  We assume that $(b_i),(b'_i),(d_i)$ are as defined in Example~\ref{ex:fix} and we associate
  to them the terms $\model{\cdot}$ defined below:
  \begin{itemize}
  \item $\model{b_i} = \bit(\mathtt{seed},\mathtt{b},i,\mathtt{Alice},b_i)$ for $1\le i\le 4$,
  \item $\model{d_i} = \bit(\mathtt{seed},\mathtt{d},i,\mathtt{Alice},b_i)$ for $1\le i\le 4$,
  \item $\model{b'_i} = \bit(\mathtt{seed},\mathtt{b},i,\mathtt{Bob},b_i)$ for $1\le i\le 4$.
  \end{itemize}
  \label{ex:name}
\end{example}

As $\mathtt{seed}$ is never used elsewhere, it always remains secret.
Therefore, except if it is output,
a bit $\bit(\mathtt{seed}, \mathrm{bitstring}, \mathrm{position}, \mathrm{role}, \mathrm{value})$
cannot be deduced by the attacker.
We shall see that the other arguments of $\bit()$ can be leveraged 
to define a meaningful equality relation over bits and 
to give the attacker some guessing capabilities (Section~\ref{sec:DY:proba:proba}).


\paragraph{Relations between bits.}
By default, two bits as modeled above are always different except when both the underlying bitstrings, positions, roles, and values are equal.
For instance, we would never have $\model{b_i}\theo \model{b_i'}$, even when $b_i=b_i'$.
This would prevent even basic protocols from being executable, including QKD BB84 as there would be no matching bases.
We should remedy this problem by interpreting the function symbol
$\bit()$ in a clever way. This has to be done carefully in order to satisfy the following requirements:
\begin{enumerate}
\item \looseness=-1
  We shall not consider that bits can be equal directly in the term algebra\footnote{Obviously excluding syntactically equal terms.}. Indeed, this would yield an unwanted
  {\em intruder's knowledge propagation effect}:
  \eg when one considers $\model{d_1}\approx_E \model{d_4}$ when it happens that $d_1=d_4$ for the fixed bitstrings under consideration,
  then one assumes that an attacker learning $\model{d_1}$, immediately learns $\model{d_4}$ as a consequence.
  This is unwanted because $d_1=d_4$ might hold for the fixed bitstrings, but this happens with probability
  $\frac{1}{2}$ without our abstractions, and negligible probability when repeated on sufficiently many bits $d_i$s.
\item Instead of considering that some bits might be equal, we consider them to be {\em interchangeable}. This is done by basing all equality and disequality tests performed by honest parties (and quantum transformations as we shall see) on a $\Sigma$-congruence relation $\approx^b_E$ that is coarser than $\approx_E$,
  \ie  $\approx^b_E \supseteq \approx_E$.
  Two bits, \eg $\model{b_1}$ and $\model{b_1'}$, are made {\em interchangeable} by letting
  $\model{b_1} \approx^b_E \model{b_1'}$.
  Indeed, all honest parties will behave the same when given $\model{b_1}$ or $\model{b_2}$.

  However, for the attacker, and for the security goals we eventually consider,
  equality is considered modulo $\theo$. Hence, breaking secrecy of $\model{d_1}$ does not immediately implies
  breaking secrecy of $\model{d_4}$, even if $\model{d_1}\approx_E^b \model{d_4}$.
\item Still, making too many bits {\em interchangeable} might introduce spurious attacks where an attacker feeds an agent who expects a bit $\model{b'}$ with
  a bit $\model{b}\approx^b_E \model{b'}$ he knows, while $b$ never equals $b'$ in reality. We thus seek to minimize the relation $\approx^b_E$ as explained next.
\end{enumerate}

\begin{definition}[$\approx^b_E$]
  The $\Sigma$-congruence relation $\approx^b_E \supseteq \approx_E$ is protocol dependent but follows straightforward rules:
  \begin{enumerate}
  \item it always holds that:
    $\bit(\mathtt{seed}, \mathrm{bits}, \mathrm{pos}, \mathrm{role}_1, \mathrm{v})
    \approx^b_E
    \bit(\mathtt{seed}, \mathrm{bits}, \mathrm{pos}, \mathrm{role}_2, \mathrm{v})$.
  \item when an honest party compares different bits from the same bitstring $\mathrm{bits}$ (\eg the bases $(b_i)_i$ for BB84 QBC),
    we extend $\approx^b_E$ as follows:
    $\bit(\mathtt{seed}, \mathrm{bits}, \mathrm{pos}_1, \mathrm{role}_1, \mathrm{v})
    \approx^b_E\linebreak[4]
    \bit(\mathtt{seed}, \mathrm{bits}, \mathrm{pos}_2, \mathrm{role}_2, \mathrm{v})$.
  \end{enumerate}
  Other extensions are possible, depending on the protocol.
  \label{def:name-equ}
\end{definition}

\begin{example}[Continuing Example~\ref{ex:name}]
  We consider that for any $\mathrm{bitstring}\in\{b,d\}$,
  $\mathrm{position}\in[1,4]$,
  $\mathrm{value}\in\{1,2\}$, the following bits are interchangeable:\\[5pt]
  \null\hfill
  $\bit(\mathtt{seed}, \mathrm{bitstring}, \mathrm{position}, \mathtt{Alice}, \mathrm{value})
  \approx^b_E
  \bit(\mathtt{seed}, \mathrm{bitstring}, \mathrm{position}, \mathtt{Bob}, \mathrm{value})$.
  \hfill\null
  \label{ex:name-equ}
\end{example}

\begin{abstraction}
  We consider a new function symbol $\bit(\cdot,\cdot,\cdot,\cdot,\cdot)\in\Sigma$ and a fresh
  name $\mathtt{seed}\in \N$ and use it to model bitstrings as described in Definition~\ref{def:name}.
  We consider an equality relation $\approx^b_E \supseteq \approx_E$ over $\terms$ as specified in
  Definition~\ref{def:name-equ}.
  Any (dis)equality test carried out by honest agents of the system
  are performed modulo $\approx^b_E$, instead of modulo $\approx_E$ as standard.

  
  \label{abs:name}
\end{abstraction}


\subsubsection{Guessing Capabilities}
\label{sec:DY:proba:proba}
We first note that if an attacker is able to deduce a bit associated to a value, he must be able to deduce
the bit associated to the dual value as a direct consequence. This corresponds to an intruder's behavior that is successful
with probability 1 without our abstractions.
Therefore, we enrich the deductive system under consideration with the following additional rule:
\begin{equation}
\infer[\textsc{Complem}]{\Gamma \vdash \bit(\mathtt{seed}, \mathrm{bits}, \mathrm{pos}, \mathrm{role}, \mathrm{v})}{\Gamma\vdash \bit(\mathtt{seed}, \mathrm{bits}, \mathrm{pos}, \mathrm{role}, \mathrm{v'}) & \Gamma \vdash v}
\raisebox{6.2pt}{.}
\label{eq:complem}
\end{equation}

Furthermore, as mentioned earlier, the attacker must have some guessing capabilities, at least
so that he has similar capabilities as the honest roles, who can notably randomly pick a series of bits (\eg bases $b'_i$)
and obtain half correct guesses on average (\eg matching bases $b_i=b'_i$). 
We now would like to give back to the attacker some guessing capabilities
we got rid of by using private terms to model bits picked at random (Abstraction~\ref{abs:name}).

We first consider the following, additional rule in our deductive system
that allows the attacker to guess a bit that models part of
the fixed bitstrings under consideration (\eg $\model{b_1}$ in Example~\ref{ex:name}).
\begin{equation}
\infer[\textsc{Guess}(\mathrm{bits},\mathrm{role},\mathrm{pos})]
{\Gamma \vdash \bit(\mathtt{seed}, \mathrm{bits}, \mathrm{pos}, \mathrm{role}, \mathrm{v})}{\Gamma\vdash v}
\label{eq:guess}
\end{equation}

In order to limit this new unconstrained intruder's guessing capability, we put restrictions on
valid {\em deduction trees} (see Figure~\ref{fig:DY-nat}) that forbid some uses of the rule $\textsc{Guess}(\cdot,\cdot,\cdot)$.
The goal here is to get rid of some of the attacker's behaviors that exploit
the fact that only fixed bitstrings are considered using our abstractions
but that would have a negligible probability of success without
our abstractions and for sufficiently large security parameters.
For instance,  the attacker (blindly) guessing the correct base $b_i$, happens with probability $\frac{1}{2}$ but
with probability $2^{-n}$ for n bases; therefore it is acceptable to limit our analyses to attacker's behaviors
that guess at most $\frac{n}{2}$ bases (this happens with probability $\frac{1}{2}$ as shown
in Remark~\ref{rmk:proba} below).
Those restrictions are then globally lifted to executions.

\begin{abstraction}
We enrich the deductive system under consideration with the rules~\ref{eq:complem},\ref{eq:guess}.
Furthermore, we consider a set of restrictions on the use of $\textsc{Guess}(\cdot,\cdot,\cdot)$.
Those restrictions are then globally lifted to executions:
executions that are considered should have deduction trees that globally respect those
constraints.
Such constraints can be derived, on a case-by-case basis, by analyzing the probability events
in the protocol (see Remark~\ref{rmk:proba} below).
Note that those constraints can be added gradually when spurious attacks
(\ie attack traces that happen with negligible probability without our abstractions) are found,
hence gradually refining our abstractions.

We are notably interested in constraining the number of guesses of bits in
a single bitstring.
Constraints should at least forbid the use of more than $\frac{n}{2}$ of the
$\textsc{Guess}(\mathrm{bits},\mathrm{role},\mathrm{pos})$ rules for some fixed $\mathrm{bits}$ and $\mathrm{role}$. 
Furthermore, for bits that are equal for the fixed bitstrings under consideration,
we usually need to add constraints that limit further the guessing capabilities of bits whose the position
satisfies certain properties, \eg position at which two bits at this position, possibly from other bitstrings, are equal.
Typically, we should take into account the probability of correctly guessing bits having this property.
The rational is again to get rid of bitstring-dependent attacker winning strategies.
Guidance can be obtained by computing the probabilities of such events as shown next.
\label{abs:guess}
\end{abstraction}

\begin{remark}
  \label{rmk:proba}
  More formally, constraints can be derived by interpreting the probability of certain events in
  the following probability space:
  $\Omega=\mathbb{B}^n$ (bitstrings whose length is the security parameter $n$),
  $\mathcal{F}=2^\Omega$ (sets of bitstrings),
  $P(E)$ describes the probability of having $b\in E$ when taking $b$ uniformly randomly from $\Omega$.
  We would typically express the probability of ``guessing'' by taking a target bitstring $B\in\Omega$
  corresponding to the fixed bitstring under consideration, with equally many zeros and ones.
  For instance, we would use
  $$
  E_{\ge k}=\{b\in\Omega\ |\ \exists i_1,\ldots,i_k, \forall j\in[1,k], b_{i_j}=B_{i_j}\}
  $$
  to denote the event of guessing at least $k$ bits.
  We can show that
  $P(E_{\ge n/2})=\frac{1}{2}$ (non negligible) while, for any positive, linear function $f$,
  $P(E_{\ge n/2+f(n)})\xrightarrow[n\mapsto \infty]{}0$ (negligible). That is why we consider that at most half the bits
  of a bitstring can be guessed.

  Similarly for $F_{\ge k}=\{b\in\Omega\ |\ \exists i_1,\ldots,i_k, \forall j\in[1,k], b_{i_j}=B_{i_j}=B_{i_1}\}$,
  one has that 
  $P(F_{\ge n/4})=\frac{1}{2}$ (non-negligible) and
  $P(F_{\ge n/4+f(n)})\xrightarrow[n\mapsto \infty]{}0$ (negligible)
  that is why we consider that only a fourth of the bits having the same value can be guessed.
\end{remark}
\begin{proof}
  For the event $E$ ($E_{=k}$ is defined straightforwardly), one has:
\begin{equation*}
  \displaystyle
  \begin{split}
 E_{\ge {\frac{n}{2}}} & = \sum_{k=\frac{n}{2}}^{k=n} P(E_{=k}) \\
 & = \left(\sum_{k=\frac{n}{2}}^{k=n} C(n,k)\right)\cdot 2^{-n} \\
 & = \frac{2^n}{2}\cdot 2^{-n} \\
 & = \frac{1}{2} \\
\end{split}
\end{equation*}
We do not detail the computation of the negligible probabilities, but the idea is that one would have to subtract
a linear number of terms of the form $2^{-n}\cdot C(n,\frac{n}{2}+j)$ for $0\le j< f(n)$. The sum of a linear number of such terms
(\eg $\frac{n}{4}$) has a limit of $\frac{1}{2}$ when $n$ approaches $+\infty$.
For the event $F$ ($F_{=k}$ is defined straightforwardly), one has:
\begin{equation*}
\begin{split}
 F_{\ge \frac{n}{4}} & = \sum_{k=\frac{n}{4}}^{k=n} P(F_{=k})\\
 & = \sum_{k=\frac{n}{4}}^{k=\frac{n}{2}} P(F_{=k})\\
 & = \left( \sum_{k=\frac{n}{4}}^{k=\frac{n}{2}} C(\frac{n}{2},k)\cdot 2^{\frac{n}{2}}\right)\cdot 2^{-n} \\
 & = \left(\frac{2^{\frac{n}{2}}}{2}\cdot 2^{\frac{n}{2}}\right)\cdot 2^{-n} \\
 & = \frac{1}{2} \\
\end{split}
\end{equation*}
We do not detail the negligible probability computation either.
\end{proof}

\begin{example}[Continuing Example~\ref{ex:name}]
  We constrain the intruder's guessing capabilities by taking the conjunction of the following:
  \begin{enumerate}
  \item At most 2 instances of the rules $\textsc{Guess}(\mathtt{b},\mathtt{Alice},i)$, $i\in[1,4]$.
  \item At most 2 instances of the rules $\textsc{Guess}(\mathtt{b},\mathtt{Bob},i)$, $i\in[1,4]$.
  \item At most 2 instances of the rules $\textsc{Guess}(\mathtt{d},\mathtt{Alice},i)$, $i\in[1,4]$.
  \item At most 1 instance of the rules $\textsc{Guess}(\mathrm{bits},\mathrm{role},i)$, $\mathrm{role}\in\{\mathtt{Alice},\mathtt{Bob}\}$, $\mathrm{bits}\in\{b,d\}$, $i\in\{1,4\}$.
    Indeed, it holds that $b_1=b_4$.
  \item At most 1 instance of the rules $\textsc{Guess}(\mathrm{bits},\mathrm{role},i)$, $\mathrm{role}\in\{\mathtt{Alice},\mathtt{Bob}\}$, $\mathrm{bits}\in\{b,d\}$, $i\in\{2,3\}$.
    Indeed, it holds that $b_2=b_3$.
  \end{enumerate}
  The first three constraints reflect the fact that, on average, the intruder cannot guess more than
  half the bits in a bitstring.
  The last two conditions express the conditional probability that the attacker can guess
  the base or the bit $d_i$ whose index corresponds to a matching base.

  Note that for an execution to be explored in our analysis,
  those constraints should be satisfied not only for each single deduction tree occurring
  in the execution but also globally when considering the combination of all deduction trees
  occurring in the execution.
\label{ex:guess}
\end{example}

\subsection{Control over Quantum Channels}
\label{sec:DY:qu}
Our intruder's refinement has now some guessing capabilities.
If quantum channels were supposed to be completely secure, there would be no need
to extend the intruder's knowledge further.
However, such a weak threat model would be at odds with the extremely strong security guarantees
such quantum protocols aim at establishing (\eg unconditional secrecy).
We think that, very much like for classical channels for the regular Dolev-Yao attacker,
quantum channels should be considered as being entirely under the intruder control.
We shall thus explain how the attacker can manipulate, transform, measure, and forge qubits.
Our ambitions are quite modest here: we specify a generic intruder that is powerful enough
to perform aforementioned attacks (\Cref{at:EPR,at:EPR-bit,at:MiM,at:done})
instead of a full quantum-capable intruder, which would
require accounting for a comprehensive mathematical quantum model.

\subsubsection{Qubits and Quantum Channels}
For now, we abstract away the infinite set of possible qubits and only consider states that
are basis vectors for the orthonormal bases considered in the protocol under scrutiny (\eg $[+]$ and $[\times]$ for BB84).
Therefore, we model qubits using a new function symbol $\qubit(\cdot,\cdot)$ of arity two.
The term $\qubit(d,b)$ denotes the basis vector $d$ in the basis $b$.
\begin{example}[Continuing Example~\ref{ex:guess}]
  For the fixed bitstrings we consider here, the term $\qubit(\model{d_1},\model{b_1})$ models the qubit which is the encoding
  of the classical bit $d_1=0$ encoded in the basis $b_1=[+]$ that Alice sends to Bob.
 \label{ex:qubit}
\end{example}
\begin{abstraction}
  Qubits are modeled by the function symbol $\qubit(\cdot,\cdot)$.
  The term $\qubit(d,b)$, where $d,b$ are two terms of the form $\bit(\cdot,\cdot,\cdot,\cdot,\cdot)$, 
  models the qubit corresponding to the classical bit $d$ encoded in the basis corresponding to the bit $b$.

  When an honest agent measures a qubit $\qubit(d,b)$ with regard to the base $b'$, he obtains $d$ if $b=b'$
  and a fresh name from $\mathcal{N}$ otherwise.
  \label{abs:qubit}
\end{abstraction}

When sent, qubits are given to the attacker. The attacker can choose what are the qubits that are received by
the protocol's agents provided that he can deduce them from his knowledge.
We extend the deductive system to take this into account:
judgments are now of the form $\Gamma; \Delta \vdash M$ where $\Gamma$ contains the previous classical outputs
and $\Delta$ contains the previous quantum outputs along with unique identifiers that
are always pairwise different.
For instance, $(\qubit(\model{d_1},\model{b_1}), \mathrm{out}_1)$ could be an element of $\Delta$.
The purpose of the identifiers is explained later.

\newcommand{\vdashQ}{\vdash\hspace*{-5pt}_\textsf{Q}\hspace{3pt}}
We introduce a new deductive system $\vdashQ$ that formally defines how the attacker can produce qubits,
whose judgments are of the form $\Gamma; \Delta \vdashQ M$.
In order to let the attacker directly use past quantum outputs, we extend the deductive system with:
$$
\infer[\textsc{ID}_\textsf{Q}(q,\mathrm{id})]{\Gamma; \Delta \vdashQ q}{(q,\mathrm{id}) \in \Delta}
\raisebox{6.2pt}{.}
$$
However, with such a rule, the attacker would be able to copy qubits and use one single
quantum output $q$ to produce several identical quantum inputs $q$. This
contradicts the {\em no-cloning theorem} (Theorem~\ref{thm:cloning}).
Therefore, we globally constrain the use of the rule $\textsc{ID}_\textsf{Q}(q,\mathrm{id})$
as shown below.

\begin{abstraction}
  We extend the deductive system with a new set of pairs of terms $\Delta$ in the judgment, thus of the form
  $\Gamma; \Delta\vdash q$.
  We introduce a new deductive system whose judgments are $\Gamma; \Delta\vdashQ q$.
  When a quantum output is triggered, the corresponding term $q$ is added to
  $\Delta$, along with a unique identifier $\mathrm{id}$: $\Delta := \Delta, (q,\mathrm{id})$.
  The attacker can choose a term $q$ as a protocol's input on a quantum channel,
  provided that he can deduce it from its current knowledge: $\Gamma; \Delta\vdashQ q$.
  We also introduce the additional rule $\textsc{ID}_\textsf{Q}$.

  Finally, we consider the following restriction on the use of
  $\textsc{ID}_\textsf{Q}(\cdot, \cdot)$:
  for some term $q$ and identifier $\mathrm{id}$,
  there should be at most one rule $\textsc{ID}_\textsf{Q}(q, \mathrm{id})$ in the derivation.
  This constraint is globally lifted to executions.
  \label{abs:copy}
\end{abstraction}
\begin{example}
  Our refined intruder is now capable of simulating the network by forwarding to the intended
  recipients all classical and quantum terms.
  However, he is still not capable of fully emulating some honest roles as he has no
  way to measure qubits (see Bob's role in BB84 QKD for instance).
\end{example}

\subsubsection{Measurements}
\label{DY:quantum:measure}
To give to the intruder at least the honest roles' capabilities, we need to consider
that he can measure qubits in any base he knows. This would also cover a subset of transformations
he can apply on qubits (\eg modification of bases).
Finally, we also let the attacker forge his own qubits provided he can deduce
the classical bit to encode and the base to use for the encoding.

\begin{abstraction}
We extend the deductive system with the following rules:
$$
\infer[\textsc{Measure}]{\Gamma;\Delta \vdash d}{\Gamma;\Delta \vdashQ \qubit(d,b) & \Gamma;\Delta \vdash b}
\quad \raisebox{6.2pt}{\text{and}}\quad
\infer[\textsc{Forge}]{\Gamma;\Delta \vdashQ \qubit(d,b)}{\Gamma;\Delta \vdash d & \Gamma;\Delta \vdash b}
\raisebox{6.2pt}{.}
$$
Note that we do not need to constrain the use of the \textsc{Measure} rule since the production
of qubits to be measured is already constrained (rule $\textsc{Guess}(q,\mathrm{id})$).
Note that the attacker is already able to deduce terms of the form
$\bit(\mathtt{seed}_\mathrm{attacker},\ldots)$ for some deducible $\mathtt{seed}_\mathrm{attacker}$, therefore different from $\mathtt{seed}$,
as $\bit()$ is a function symbol (see the third rule depicted in Figure~\ref{fig:DY-nat}).
\label{abs:measure}
\end{abstraction}

\begin{example}
  Our refined intruder is now capable of performing full MitM attacks (\eg \Cref{at:MiM,at:done}).
  Indeed, the attacker is now able to emulate honest roles'
  guessing (see Section~\ref{sec:DY:proba:proba}) and quantum capabilities.
  However, he is still unable to perform EPR attacks (\eg \Cref{at:EPR,at:EPR-bit}).
\end{example}

Note that one could have added a rule that allows the attacker to measure a qubit $\qubit(d,b)$
in a different base $b'\neq b$, yielding a random bit (Theorem~\ref{thm:dist}).
But such a rule would be subsumed by the fact that the attacker can always produce new fresh data.

Since the combination of two different bases do not yield orthogonal vectors,
Theorem~\ref{thm:dist} implies that
knowing $\qubit(d,b)$ but not $b$
is not enough to guess $d$ with probability ${>}\frac{1}{2}$.
In our model, we consider such a guess impossible (because the number of qubits depends on security parameters).
In the future, we may refine such rules, for instance when the attacker knows
both $\qubit(d,[+])$ and $\qubit(d,[\times])$ or both $\qubit(0,b)$ and $\qubit(1,b)$.

\subsubsection{EPR-Attacks}
\label{sec:DY:qu:EPR}
\newcommand{\epr}{\qubit_{\mathrm{EPR}}}
Finally, we have seen that EPR pairs can be exploited to learn which is the classical bit that has been
measured by another party, provided that the base used in the measurement is known
(see EPR experiment in Section~\ref{sec:back:mecha:epr}
or the super dense coding and the quantum teleportation scheme in Section~\ref{sec:proto:other}).
We would like to consider an attacker who can exploit this mechanism.
This involves modifications in the protocol semantics since a measurement performed by an honest agent
can potentially increase the attacker's knowledge.

We model an EPR pair of intricated qubits forged by the attacker by a term
$\epr(b,d,\mathrm{id})$ that represents the half of the EPR pair that the attacker may
send to honest agents. The EPR identifier $\mathrm{id}$ aims at keeping track of qubits that may
increase the intruder's knowledge when measured.

We extend the two deductive systems $\vdash$ and $\vdashQ$ as follows.
First, we consider judgments of the form $\Gamma;\Delta;S\vdash M$ and $\Gamma;\Delta;S\vdashQ M$ where $S$ is a partial mapping from
EPR identifiers to $\terms\times\terms$.
A pair of terms $(n,b)$ associated to a an EPR identifier indicates
that the ``honest half'' of the pair has been measured in a base $b$ and yielded $n$.
Second, we consider the additional rule:
$$
\infer[\textsc{Epr}]{\Gamma;\Delta;S \vdashQ \epr(b,d,\mathrm{id})}{\Gamma;\Delta;S \vdash b & \Gamma;\Delta;S \vdash d & \mathrm{fresh\ id}}
\raisebox{6.2pt}{.}
$$
Terms of the form $\epr(b,d,\mathrm{id})$ that the attacker can deduce using the \textsc{Epr} rule
can be chosen by the attacker to be some protocol's input.
When measured by an honest agent, a term $\epr(b,d,\mathrm{id})$ behaves exactly as $\qubit(b,d)$,
except that it additionally produces a substitution $S_\mathit{Sys}=\{\mathrm{id}\mapsto (n,b)\}$
where $n$ is the resulting term from the measurement (\ie $d$ or a fresh name) and $b$ is the base
that has been used to measure the qubit.
The next hypothesis the attacker will have at his disposal when deducing the next inputs
are updated as before, except that $S':=S \circ S_\mathit{Sys}$ (the attacker will know that
his half of the EPR pair $\mathrm{id}$ has been measured and the term $n$ has been observed with regard to the base $b$).
Finally, the deductive systems is enriched with the following rule:
$$
\infer[\textsc{Epr-Leak}]{\Gamma;\Delta;S \vdash n}{\Gamma;\Delta;S \vdash b & \exists\mathrm{id}. (n,b)\in S(\mathrm{id})}
\raisebox{6.2pt}{.}
$$
\begin{example}[Continuing Example~\ref{ex:qubit}]
  Assume that the attacker builds an EPR pair and deduces the term
  $\epr(b_a,d_a,\mathrm{id})$ (where $d_a,b_a$ are terms built by the attacker with $\bit()$ on public constants)
  that he uses as a replacement of the genuine
  qubit $\qubit(\model{d_1},\model{b_1})$ sent by Alice to Bob.
  Upon reception of this input, Bob measures the qubit with regard to $\model{b_1}$ and obtains a fresh name $n$ as a result (because $b_a\not\theo^b \model{b_1}$).
  The new hypothesis $S'$ now maps $\mathrm{id}$ to $(n,\model{b_1})$.
  When Alice sends the bases $(\model{b_i})_i$ to Bob and Bob sends back the matching bases, the attacker is able to deduce $\model{b_1}$
  via a derivation $\Pi$
  (either because $\model{b_1}\in\Gamma'$ or because $\model{\lnot b_1}\in\Gamma'$ so \textsc{Complem} can be used to deduce $\model{b_1}$).
  Therefore, the following derivation shows that the attacker is then able to deduce $n$, the bit Bob has measured:
  $$
  \infer[\textsc{Epr-Leak}]{\Gamma';\Delta';S' \vdash n}{\infer{\Gamma';\Delta';S' \vdash \model{b_1}}{\Pi} & (n,\model{b_1})\in S'(\mathrm{id})}
  \raisebox{6.2pt}{.}
  $$
  By performing this replacement on all qubits sent by Alice to Bob, we can easily capture the EPR attack against
  BB84 QKD (Attack~\ref{at:EPR}) or its bit commitment variant (Attack~\ref{at:EPR-bit}).
\end{example}

\begin{abstraction}
  We introduce a fresh function symbol $\epr(\cdot,\cdot,\cdot)$,
  modify the protocol semantics when operating on such terms,
  and extend the deductive systems with the rules  
  \textsc{Epr} and \textsc{Epr-Leak} as explained above.
  \label{abs:EPR}
\end{abstraction}


\begin{figure}[th] 
  \centering
  \includegraphics[width=1\textwidth,page=2]{ipe.pdf}
  \caption{Our Quantum Dolev-Yao Intruder}
  \label{fig:dolevYao:quantum}
\end{figure}

\subsection{Our Quantum Dolev-Yao Intruder}
Figure~\ref{fig:dolevYao:quantum} sums up additional intruder capabilities:
guessing capabilities and a constrained control over the quantum channel.
Our Quantum Dolev-Yao Intruder is simple and abstract enough that he can be embedded
in symbolic verification techniques and tools (see Section~\ref{sec:verif}) but expressive and
powerful enough that he can capture all the aforementioned attacks (\Cref{at:MiM,at:done,at:EPR,at:EPR-bit}).



\section{Formal Verification with Tamarin}
\label{sec:verif}
We first give an overview of the \tamarin verifier (\Cref{sec:verif:tam}) after which we explain how
we can model our abstractions in the tool (\Cref{sec:verif:DY}) and conclude by presenting and discussing
the results of our automated analyses (\Cref{sec:verif:results}).
Finally, we show that similar encoding can be used by competitive verifiers (\proverif, \deepsec, and \akiss)
and we compare their verification efficiency and precision on our case studies(\Cref{sec:verif:other}).

\subsection{The \tamarin verifier}
\label{sec:verif:tam}
This section gives a short and basic introduction to the state-of-the-art \tamarin verifier.
Its content has been adapted from~\cite{5GAKA}.

\subsubsection{Multiset Rewriting Rules}
\label{sec:verif:tam:prem}
\tamarin is a state-of-the-art protocol verification tool for the \emph{symbolic model}, which supports stateful protocols, a high level of automation, and equivalence properties~\cite{tamarin-equiv}.
It has previously been applied to real-world protocols with complex state machines, numerous messages, and complex security properties such as TLS 1.3~\cite{tamarin-tls} or 5G AKA in mobile telephony~\cite{5GAKA}.
We chose \tamarin as it is currently the only tool that combines stateful protocols (mandatory for encoding our abstractions)
and semi-automatic proofs for which proof strategies can be exploited to ease and speed up proof search.
We shall see however that fully automatic tools are also able to verify some of our case studies (see \Cref{sec:verif:other}).

As mentioned earler, in the symbolic model and a fortiori in \tamarin,
messages are described as terms. For example, $\exenc(m,k)$ represents the message $m$ encrypted using the key $k$.
The algebraic properties of the cryptographic functions are then specified using equations over terms.
For example the equation $\exdec(\exenc(m,k),k) = m$ specifies the expected semantics for symmetric encryption:
the decryption using the encryption key yields the plaintext.
As is common in the symbolic model, cryptographic messages do not satisfy other properties than those intended algebraic properties,
yielding the so-called \emph{black box cryptography assumption}
(\eg one cannot exploit potential weaknesses in cryptographic primitives).

The protocol itself is described using multi-set rewrite rules.
These rules manipulate multisets of \emph{facts}, which model the current state of the system with \emph{terms} as arguments.
\begin{example}\label{ex:hashmsr}
The following rules describe a simple protocol that sends an encrypted message.
The first rule creates a new long-term shared key $k$ (the fact $\factStyle{!Ltk}$ is persistent: it can be used as a premise
multiple times).
The second rule describes the agent $A$ who sends a fresh message $m$ together with its MAC with the shared key $k$ to $B$.
Finally, the third rule describes $B$ who is expecting a message and a corresponding MAC with $k$ as input.
Note that the third rule can only be triggered if the input matches the premise, \ie if the input message is correctly MACed with $k$.
\[
\begin{array}{l}
Create\_Ltk : ~ [\factStyle{Fr}(k)] \rwr[] [\factStyle{!Ltk}(k)], \\
Send\_A : ~ [\factStyle{!Ltk}(k), \factStyle{Fr}(m)] \rwr[\factStyle{Sent}(m)] [\factStyle{Out}(\pair{m}{\mac(m,k)})], \\
Receive\_B : ~ [\factStyle{!Ltk}(k), \factStyle{In}(\pair{x}{\mac(x,k)})] \rwr[\factStyle{Received}(x)] [] \qed\\
\end{array}
\]
\end{example}
These rules yield a labeled transition system describing the possible protocol executions
(see~\cite{tamarin-manual,schmidt2012automated} for  the syntax and semantics).
\tamarin combines the protocol semantics with a Dolev-Yao style attacker.
This attacker controls the entire network and can thereby intercept, delete, modify, delay, inject, and build new messages.
However, he is limited by the cryptography: he cannot forge signatures or decrypt messages without knowing the key (black box  cryptography assumption).
He nevertheless can apply any function (e.g., hashing, XOR, encryption, pairing, \ldots) on messages he knows to compute new messages.

\subsubsection{Formalizing Security Goals in Tamarin}
\label{sec:formal:prop}

In \tamarin, security properties are specified in two different ways.
First, trace properties, such as secrecy or variants of authentication, are specified using formulas in a first-order logic with timepoints.
\begin{example}\label{ex:property}
Consider the multiset rewrite rules given in Example~\ref{ex:hashmsr}.
The following property specifies a form of non-injective agreement on the message,
\ie that any message received by $B$ was previously sent by $A$:
\\[0pt]\null\hfill$
    \forall i,m.
    \factStyle{Received}(m)@i
    \Rightarrow ( \exists j .  \factStyle{Sent}(m)@j \wedge j \lessdot i) .
$\hfill\null\\[2pt]
\end{example}
For each specified property, \tamarin checks that the property holds for all possible protocol executions, and all possible adversary behaviors.
To achieve this, \tamarin explores all possible executions in a
backward manner, searching for reachable attack states, which are counterexamples to the security properties.

Equivalence properties, such as unlinkability, are expressed by requiring that two instances of the protocol cannot be distinguished by the attacker.
Such properties are specified using \emph{diff}-terms
(which take two arguments), essentially defining two different instances of the protocol that only differ in some terms.
\tamarin then checks observational equivalence (see~\cite{tamarin-equiv}), i.e., it compares the two resulting systems and checks that the attacker cannot distinguish them for any execution and for any of its behaviors.

\looseness=-1
In fully automatic mode, \tamarin  either returns a proof that the property holds, or a counterexample/attack if the property is violated, or may not terminate as the underlying problem is undecidable.
\tamarin can also be used in interactive mode, where the user can guide the proof search.
Moreover the user can supply heuristics called \emph{oracles} to guide the proof search in a sound way.
We heavily rely on heuristics in our analyses as they allow us to tame the complexity of the protocol, as explained below.

\begin{figure}[th]
\centering 
\begin{lstlisting}[breaklines]
rule Setup:
   let
// Fixed bitstrings (see Example 4)   
       b1   = bit(~k, 'b', '1', 'Alice', '0') // MATCH, rectilinear=0
       b1_  = bit(~k, 'b', '1', 'Bob',   '0')
       b2   = bit(~k, 'b', '2', 'Alice', '1') // NO-MATCH, diagonal=1
       b2_  = bit(~k, 'b', '2', 'Bob',   '0')
       b3   = bit(~k, 'b', '3', 'Alice', '0') // NO-MATCH
       b3_  = bit(~k, 'b', '3', 'Bob',   '1')
       b4   = bit(~k, 'b', '4', 'Alice', '1') // MATCH
       b4_  = bit(~k, 'b', '4', 'Bob',   '1')
       d1   = bit(~k, 'd', '1', 'Alice', '0') // values '0' are not relevant
       d2   = bit(~k, 'd', '2', 'Alice', '1')       
       d3   = bit(~k, 'd', '3', 'Alice', '1')
       d4   = bit(~k, 'd', '4', 'Alice', '0')
       dsL  = <d1,d2,d3,d4>
       bsL  = <b1,b2,b3,b4>
       bsL_ = <b1_,b2_,b3_,b4_>
       qubitsL = <qubit(d1,b1), qubit(d2,b2), qubit(d3,b3), qubit(d4,b4)>
   in
   [ // Secret bitstrings: private seed ~k
     Fr( ~k )
     // Thread id (and shared secret)
   , Fr( ~tid )
   ]
-->
   [ Alice_0(~tid,bsL,dsL,qubitsL)
   , Bob_0(~tid,bsL_)
   // Seed for the private sample names sets
   , !SecretSampling(~k)
   ]
rule Guess: 			// Rule ID_Q, subject to restrictions
let r = <bitstring, role, position> in
  [ !SecretSampling(~k), In(<bitstring, role, position) ]
--[ Guess(r) ]->                // Guess(r) is subject to restrictions
  [ Out(bit(~k, bitstring, position, role, '0'))
  , Out(bit(~k, bitstring, position, role, '1')) ]
rule Complem: 			// Rule Complem
let r = <bitstring, position, role> in
  [ !SecretSampling(~k)
  , In(bit(~k, bitstring, position, role, value))
  ]
--[ Complem(r) ]->
  [ Out(bit(~k, bitstring, position, role, '0'))
  , Out(bit(~k, bitstring, position, role, '1')) ]
\end{lstlisting}
\caption{Modeling of the fixed bitstrings, the sample names sets (see \Cref{abs:fix}, \Cref{abs:name}, and \Cref{ex:name}), as well as \textsc{Guess($\cdot$)} and \textsc{Complem} rules modeling (see \Cref{abs:guess}). Note that the use of \textsc{Guess($\cdot$)} is constrained by restrictions shown in \Cref{fig:code:restr}.}
\label{fig:code:setup}
\end{figure}

\begin{figure}[h]
\centering 
\begin{lstlisting}[breaklines]
// Restrictions 1 and 2 (Example 6):
restriction GuessProba_restriction1:
    "(All #i #j #k role idx1 idx2 idx3. Guess(<'b',role,idx1>)@i
       & Guess(<'b',role,idx2>)@j & Guess(<'b',role,idx3>)@k
           ==> (#i = #j | #j = #k | #i = #k))"
// Restriction 3 (Example 6):
restriction GuessProba_restriction2:
    "(All #i #j #k role idx1 idx2 idx3. Guess(<'d',role,idx1>)@i
       & Guess(<'d',role,idx2>)@j & Guess(<'d',idx3>)@k
           ==> (#i = #j | #j = #k | #i = #k))"
// Restriction 4 (Example 6):
restriction GuessProba_restriction3:
 "(All #i #j role1 role2 f1 f2. Guess(<f1,role1,'1'>)@i & Guess(<f2,role2,'4'>)@j
           ==> #i = #j)"
restriction GuessProba_restriction4:
 "(All #i #j role1 role2 f1 f2. Guess(<f1,role1,'2'>)@i & Guess(<f2,role2,'3'>)@j
           ==> #i = #j)"
\end{lstlisting}
\caption{Tamarin restrictions modeling constraints of \textsc{Guess($\cdot$)} (see \Cref{abs:guess}, and \Cref{ex:guess}).}
\label{fig:code:restr}
\end{figure}

\begin{figure}[h]
\centering 
\begin{lstlisting}[breaklines]
restriction equality:
  "All x y #i.
    (EqB( x, y ) @ #i)
       ==> (Ex k bitstring position role1 role2 value.
                  x = bit(k,bitstring,position,role1,value) &
                  y = bit(k,bitstring,position,role2,value))"
restriction disequality:
  "All x y #i.
    (NeqB( x, y ) @ #i)
       ==> not(Ex k bitstring position role1 role2 value.
                  x = bit(k,bitstring,position,role1,value) & 
                  y = bit(k,bitstring,position,role2,value))"
\end{lstlisting}
\caption{Tamarin restrictions modeling $\theo^b$ (see \Cref{def:name-equ}, and \Cref{ex:name-equ}).
  Note that $x\theo y$ does not necessarily imply $\factStyle{EqB}(x,y)$ here while we require
  $\approx^b_E \supseteq \approx_E$. This is w.l.o.g. though because $\factStyle{EqB}$ is always used when one of the two arguments is of the form $\bit(\ldots)$ which is not subject to any
  relation in $\theo$.}
\label{fig:code:equality}
\end{figure}

\begin{figure}[th]
\centering 
\begin{lstlisting}[breaklines]
functions: qubit/2, [...]   // new function symbol for qubit of arity 2

/*********** Honest roles sending and receiving qubits ****************/
// Rule that lets Alice sends the qubits c1, c2, c3 and c4
rule Alice_send_ds:
  let qubitsL = <c1,c2,c3,c4>
  in
  [ Alice_0(~tid,bsL,dsL,qubitsL) ]
--[ QS('1',c1), QS('2',c2), QS('3',c3), QS('4',c4)
  ]->
// The QSend(number,qubit) *linear* facts store a qubit that has been sent by an honest party
  [ QSend('1',c1), QSend('2',c2), QSend('3',c3), QSend('4',c4)
  , Alice_1(~tid,bsL,dsL)
  ]

// Rule that lets Bob choose bases to measure qubits
rule Bob_receive_connection_ds:
  let bsL_ = <b1_,b2_,b3_,b4_>
  in
  [ Bob_0(~tid,bsL_) ]
  -->
// The QReadBob(number,base) facts store the base that Bob will use to measure the number'th qubit
  [ QReadBob('1',b1_), QReadBob('2',b2_), QReadBob('3',b3_), QReadBob('4',b4_)
  , Bob_0_co(~tid,bsL_)
  ]

/********* Intruder forwarding qubits on the quantum channels *********/
rule q_receive_match_Bob: // Bob measure in the correct base
  [ QSend(id, qubit(d, b1))
  , QReadBob(id, b2)
  ]
--[ QRead(), Forward(), EqB(b1,b2) ]->
  [ QReceiveBob(id,d) ] // QReceiveBob() facts are then given to Bob

rule q_receive_Nomatch_Bob: // Bob measures with a wrong base
  [ Fr(~random)
  , In(id)
  ]
--[ QRead(), NoMatch() ]->
  [ QReceiveBob(id,~random) ] // QReceiveBob() facts are then given to Bob
\end{lstlisting}
\caption{Modeling of qubits, of the quantum channel (\Cref{abs:qubit}) and of the constraint that
  qubits cannot be copied (\Cref{abs:copy}); using linear facts. The fact $\texttt{EqB}(\cdot,\cdot)$ is constrained as shown in \Cref{fig:code:equality}.}
\label{fig:code:qubit}
\end{figure}

\begin{figure}
\centering 
\begin{lstlisting}[breaklines]
rule q_receive_match_Eve: // Intruder measures in the correct base
  [ QSend(id, qubit(d, b1))
  , In( b2 )
  ]
--[ QInterceptMatchEve(id,d,b1), QRead(), EqB(b1,b2) ]->
  [ Out( d ) ]

rule q_receive_forgeEve_match_Bob: // Intruder forges a qubit measured by Bob in the correct base. The rule q_receive_Nomatch_Bob already covers the other case.
  [ In( d )
  , In( b1 )
  , QReadBob(id, b2)
  ]
--[ QRead(), QSendEve(id,qubit(d,b2)), EqB(b1,b2) ]->
[ QReceiveBob(id,d) ]
\end{lstlisting}
\caption{Modeling of qubits measurements (\Cref{abs:measure}) and creation.}
\label{fig:code:measure}
\end{figure}

\begin{figure}[h]
\centering 
\begin{lstlisting}[breaklines]
rule q_receive_Nomatch_Bob_EPR:  // Intruder forges a new EPR pair measured by Bob in a wrong base. The other case is already covered by the rule q_receive_match_Eve.
  [ QReadBob(id, b_)
  , Fr( ~random )
  ]
--[ QRead(), NoMatch() ]->
  [ QReceiveBob(id,~random)
  , State_EPR(~random,b_)
  ]
rule q_measure_EPR_Eve:  // Leak of the measured bit
  [ State_EPR(d, b1)
  , In(b2)
  ]
--[ EPR(), EqB(b1,b2) ]->
  [ Out(d) ]
\end{lstlisting}
\caption{Modeling of Intruder EPR pair creation (\Cref{abs:EPR}).}
\label{fig:code:EPR}
\end{figure}


\subsection{Quantum Dolev-Yao Intruder Modeled in Tamarin}
\label{sec:verif:DY}
We now describe how the non-standard extension of the Dolev-Yao attacker we have described in \Cref{sec:DY} can be
modeled in the \tamarin verifier. We use our model of BB84 QKD to exemplify our modeling choices.

\subsubsection{Modeling Possibilities and Guessing Capabilities}
We depict in \Cref{fig:code:setup} and in \Cref{fig:code:restr} the Tamarin rules that model
the intruder's guessing capabilities and the fixed bitstrings for BB84 QKD (introduced in \Cref{ex:name}).

\subsubsection{Modeling Quantum Intruder}
We depict in \Cref{fig:code:qubit,fig:code:measure,fig:code:EPR,fig:code:equality} how the quantum channel and the intruder's quantum capabilities are modeled in Tamarin, following \Cref{abs:qubit,abs:copy,abs:measure,abs:EPR}.

\paragraph{Qubits and quantum channel.}
\Cref{fig:code:qubit} shows that, when an honest party wants to send a series of qubits $\qubit(d_i,b_i)$,
a series of {\em linear facts} $\textrm{QSend}(i,\qubit(d_i,b_i))$ are created (\Cref{fig:code:qubit},line 12).
Those facts, being linear, cannot be copied. This is how we model the constraint on the use of the rule $\textsc{\textsc{ID}$_\textsf{Q}$}(\cdot)$ (\Cref{abs:copy}).

Next, the recipient (Bob in our example) can commit on bases he wants to use for measuring the incoming qubits, producing
states $\textrm{QReadBob}(i,b_i)$ (\Cref{fig:code:qubit}, line 23).

The intruder can then forward sent qubits (stored in $\textrm{QSend}(i,\qubit(d_i,b_i))$ facts), without interacting with
them, using the \textrm{q\_receive\_match\_Bob} and \textrm{q\_receive\_Nomatch\_Bob} rules. The former
can be triggered when the base used to measure matches with the base in the qubit. The latter can be used otherwise
and produce a random outcome (\Cref{fig:code:qubit}, line 36). In both cases, the outcome is stored in a \textrm{QReceiveBob} linear fact that
can then be read by the appropriate recipient (here Bob).

\paragraph{Intruder's interception capabilities.}
\Cref{fig:code:measure} shows how forging and measuring capabilities are modeled in Tamarin (\Cref{abs:measure}).
As mentioned in \Cref{DY:quantum:measure}, we do not need to give to the intruder an explicit extra capability for
measuring qubits in non-matching bases, as the outcome would be random anyway.

\paragraph{Intruder's EPR capabilities.}
Finally, \Cref{fig:code:EPR} shows our modeling of EPR pair creations and measurements. The intruder can forge new EPR pairs and send them to Bob with the rule \textrm{q\_receive\_noMatch\_Bob\_EPR}. Note that this rule only covers the case where Bob uses a wrong base to measure the qubit. Indeed, the case where Bob uses the appropriate base is already covered by the rule \textrm{q\_receive\_match\_Bob}.
When the associated qubit $\qubit(d,b)$ is received and measured by an Honest party (here Bob) in a base $b'$,
an additional persistent state $\textrm{State\_EPR}(n,b')$ is produced, where $n$ is what Bob has measured.
This fact has the role of $S_\mathit{Sys}$ (\Cref{sec:DY:qu:EPR}) as it can be read by the intruder provided he can deduce $b'$ (\Cref{fig:code:EPR}, rule \textrm{q\_measure\_EPR\_Eve} at line 14).

\subsection{Analyzed Scenarios and Results}
\label{sec:verif:results}
We have modeled the BB84 QKD and QBC protocols in the \tamarin verifier.
Our \tamarin models are freely available at~\cite{models}. We were able to automatically obtain security proofs
(w.r.t.~our model) and automatically finding attacks from those models.

However, \tamarin needs a considerable amount of time for verifying BB84 QKD involving four qubits when considering the full quantum Dolev Yao attacker.
While we were able to verify the full protocol with all attacker's capabilities in less than
one hour, we adopt a more efficient, yet generic, methodology by verifying the protocol
for increasingly complex settings (\eg more and more qubits)
and for increasingly rich threat models; \ie considering more and more of the attacker capabilities we have defined (\eg forging, guessing, EPR).
We thus first discover attacks very quickly for simpler models that also affect the most complex model we have.
We then show when security holds, for example under stronger trust assumptions, and then
gradually add attacker capabilities or increase the number of qubits.
This allows for a quick verification-fix loop until reaching the target protocol and threat model.
We stress that the different threat models we consider are defined in a fully modular way
and can be reused to analyze other protocols.

\subsubsection{Verification of BB84 QKD}
\label{sec:verif:proto:BB84}
We have already described most parts of our modeling of BB84 QKD for the scenario described in \Cref{ex:qubit}
through \Cref{fig:code:setup,fig:code:restr,fig:code:qubit,fig:code:measure,fig:code:EPR,fig:code:equality}.

\paragraph{Security Goals.}
\looseness=-1
We are interested in verifying secrecy on the session key established by the protocol from Alice's point of view
and from Bob's point of view.
We leave the verification of agreement on the key as future work, but there is no conceptual
issue that would prevent us to do so.

\paragraph{Increasingly powerful attackers.}
Formally we consider the following threat models:
\begin{itemize}
\item {\em Passive attacker}: the attacker cannot guess any bit and cannot forge or measure any qubit. He can only forward qubits.
  This corresponds to a quantum attacker with only the rule \textsc{ID$_\textsf{Q}(\cdot)$} but no rule
  \textsc{Complem}, \textsc{Guess$(\cdot)$}, \textsc{Measure}, \textsc{Forge}, \textsc{Epr} or \textsc{Epr-Leak} (see \Cref{sec:DY}).
\item {\em Forge attacker}: as the passive attacker but with the additional capabilities of
  measuring qubits (rule \textsc{Measure}) and forging qubits (rule \textsc{Forge}).
\item {\em EPR attacker}: as the forge attacker but with EPR forging capabilities (rules \textsc{Epr} and \textsc{Epr-Leak}).
  Note that we slightly and soundly modified the way EPR capabilities are modeled in Tamarin (from \Cref{fig:code:EPR}).
  Indeed, we consider the worst case scenario where the attacker always produce EPR pairs when forging qubits.
  This is w.l.o.g.
\item {\em Guess attacker}: as the forge attacker but with guessing capabilities (rule \textsc{Guess$(\cdot)$} and \textsc{Complem}).
\item {\em Full attacker}: as the guess attacker but with EPR forging capabilities (rule \textsc{Epr} and \textsc{Epr-Leak})
  {\em and} guessing capabilities (rule \textsc{Guess$(\cdot)$} and \textsc{Complem}).
\end{itemize}
Note that the full attacker is the quantum Dolev Yao attacker we have defined in \Cref{sec:DY}. However,
the passive attacker does not correspond to the classical Dolev Yao attacker as probabilities are already handled
differently (see \Cref{abs:name} for instance).

\newcommand{\TMorder}{\textsf{Order}\xspace}
\newcommand{\TMdone}{\textsf{Auth(done)}\xspace}
\newcommand{\TMbasesAlice}{\textsf{Auth(bases)}\xspace}
\newcommand{\TMbases}{\textsf{Auth(matchingBases)}\xspace}
\newcommand{\TMdonebases}{\textsf{Auth(done,matchingBases)}\xspace}
\newcommand{\TMdonebasesverif}{\textsf{Auth(done,matchingBases,verif)}\xspace}
\newcommand{\TMverif}{\textsf{Auth(verif)}\xspace}
We have modeled the five different threat models for different compromised scenarios
in a modular way in a single file (\texttt{QKD\_BB84.m4} from~\cite{models}) by using the \texttt{m4} macro processor.
From this single file, the five different \tamarin models can be automatically generated.
In addition to the aforementioned variants of threat models,
we also model compromised scenarios corresponding to different trust assumptions on
the classical channels: \eg are the matching bases sent by Bob to Alice authenticated,
are all the verification bits send by Alice to Bob authenticated, \etc 
This is also implemented in a modular way by labeling the rules that allow such compromises.
The resulting models, proofs, attacks, and instructions for reproducibility can be found at~\cite{models}.

\Cref{tab:sources} notably depicts the number of sources\footnote{This number gives an idea about the size of the search space as it corresponds to the number of cases to be considered for the rules be triggered.} as computed by Tamarin for the different threat models.
For instance, there are 989 different sources for the Full attacker which explains why verification based on such a model
can take a dozen of minutes to perform.
\begin{table}  [th]
  \centering
  \begin{tabular}{r|c|c|c|c}
    Threat Model&\# Rules & \# Sources & Max \# Sources (per rule) & \# Sources for rule Bob\_1 \\\hline
    Passive & 21 & 438 & 29 & 20 \\
    Forge  & 21 & 2006 & 492 & 81 \\
    EPR & 22 & 2017 & 492 & 81 \\
    Guess  & 21 & 2010 & 492 & 81 \\
    Full & 22 & 2021 & 492 & 81 \\
  \end{tabular}
  \caption{Number of sources of the different threat models. We indicate the number of rules, the number of total sources, the maximum number sources of a single rule, as well as the number of sources of the rule Bob\_2 which corresponds to the final Bob input.}
  \label{tab:sources}
\end{table}

\paragraph{Results.}
We now summarize our result, starting with the weakest and ending with the strongest threat models.
\smallskip{}

\noindent{\em Passive attacker.}
As expected, we were able to show that key secrecy from both point of views hold without any authenticity assumption.
\smallskip{}

\noindent{\em Forge attacker.}
We state and analyze 3 lemmas for checking under which conditions, such
as authenticity of messages on classical channels, key secrecy is met.
Incidentally, we automatically have found \Cref{at:done} with Tamarin.
For key secrecy to hold from Bob's point of view, it is required that either
the authenticity of the message \texttt{Done} is provided (denoted by \TMdone) or
all Bob's qubits measurements should happen before the Alice's bases reveal
(\ie {\em strict ordering} of measurements and bases reveal, denoted by \TMorder).
In particular, we show that for that threat model, when
\TMorder is enforced but no message sent on the classical channel is authenticated,
secrecy of the established key holds from Bob's point of views.
Similarly, we formally show that key secrecy from Alice's point of view requires \TMorder.
However, secrecy from Alice's point of view fails to hold as soon as \TMorder
is violated, even when \TMdone is provided.
This is as expected since Alice may then think she has established a secure key
with Bob while Bob has not yet confirmed he has received matching verification
bits, which will fail.
\smallskip{}

\noindent{\em Guess attacker.}
Obviously, we start by adding a pre-condition to the secrecy lemmas by forbidding the attacker to guess
the bit $d_i$ for which we check secrecy.
We show that enforcing \TMdone is not sufficient any more
to achieve secrecy from Bob's point of view. Indeed, the attacker can leverage his guessing
capabilities to create discrepancies between the sets of matching bases from Alice's and Bob's point of view.
We show however that when \TMorder is enforced, then secrecy still holds from Alice's and Bob's point of view.
Similarly, we show that when \texttt{Done} and the matching bases are authentic (denoted by \TMdonebases),
secrecy holds, but only from Bob's point of view.
\smallskip{}

\noindent{\em EPR attacker.}
We show that even when \TMorder or \TMdone is enforced,
key secrecy from Bob's point of view is no longer satisfied, as opposed to the {\em Forge attacker}.
Indeed, we automatically found \Cref{at:EPR} witnessing the latter.
We then show that even when \texttt{Done} and the matching bases are authentic (denoted by \TMdonebases),
secrecy fails to hold  (we found \Cref{at:EPR} otherwise).
It is required that, in addition,
the authenticity of the verification bits sent by Alice on the classical channel is ensured.
Incidentally, in our model, this also implies that all bits corresponding to matching bases, and not only the verification bits, should be
checked against authenticated, received bits. Otherwise, the attacker could exploit his knowledge of the bit chosen
for verification to perform targeted EPR-attack that would happen with negligible probability without our abstractions.
We modeled this extra check through an additional exchanged bit sent over a compromised secure channel (attacker has write but not read access)
from Alice to Bob. We write \TMverif when authenticity is provided to the verification bits and to the bits chosen for the key.
We were then able to prove secrecy under this threat model; \ie \TMdonebasesverif.
Hence \TMverif is a necessary condition.
\smallskip{}

\noindent{\em Full attacker.}
We have not found additional attacks that require all the previous attacker capabilities at the same time.
Indeed, we were able to prove secrecy under the combination of the necessary conditions mentioned so far.
To sum up, we were able to show secrecy from Bob's point of view under those minimal security assumptions:
\begin{enumerate}
\item All bits $(d_i)$ should be checked for equality;
  \ie the bits of the key should be implicitly checked in order to avoid bitstring-specific attacks.
  While this could be interpreted as a modeling artifact, it also shows an expected weakness when not enough verification
  bits are checked, allowing key bits-targeted attacks.
\item \TMdonebasesverif should be enforced.
\end{enumerate}
In particular, the bases sent by Alice to Bob do not have to be authentic in our model, \ie
\TMbasesAlice is not required.

Secrecy from Alice's point of view requires \TMorder, which in our opinion should rather be realized through
cryptography. We believe that by analyzing the protocol with the key-confirmation phase,
Alice could then obtain secrecy under weaker assumptions. We leave this task as future work.



The above conclusion follows from the verification of 26 lemmas that we have conduced fully automatically using \tamarin.
The total computation time is about 2 hours with 16 cores Intel Xeon 3.10GHz and \tamarin, branch \texttt{develop}, version \texttt{1.5.1}.

\paragraph{Results for a simpler scenario with 2 qubits.}
We first have verified a simplified scenario with 2 qubits for which the bases $b_i$ and $b_i'$ match (between Alice and Bob).
The first one will be used for the secret key
and the second one for the verification bit.
Unsurprisingly, this simplification considerably reduced the verification time.
This is supported by the comparison of the number of sources shown in \Cref{tab:sources-scenario}.
The resulting model is provided in the file \texttt{QKD\_BB84\_2qubits.spthy} from~\cite{models}.
We were already able to capture most of the aforementioned attacks in this simpler model.


\begin{table}[th]
  \centering
  \begin{tabular}{r|c|c|c|c}
    Scenario l&\# Rules & \# Sources & Max \# Sources (per rule) & \# Sources for rule Bob\_1 \\\hline
    4 qubits & 22 & 2021 & 492 & 81 \\
    2 qubits & 22 & 116 & 21 & 9 \\
  \end{tabular}
  \caption{Number of sources of the different scenarios considered for the full  quantum attacker.}
  \label{tab:sources-scenario}
\end{table}


\paragraph{Efficiency issues.}
We now give some explanations about the verification time (2 hours) that could be considered quite long, considering all our abstractions.
The main reason is that all our attacker's capabilities can be combined in many ways on a single qubit, and it gets much worse when considering
multiple qubits.
For instance, any single qubit that is sent can be either forwarded, measured in an appropriate base and an equal forged
qubit is resent, measured in a wrong base and a new qubit is forged instead, measured in an appropriate base and 
an EPR pair of qubits is created and half of it is sent instead, \etc\ 
When this is explored for all four qubits, Tamarin experiences the expected combinatorial explosion
(see \Cref{tab:sources-scenario}).
In practice, we reach the Tamarin verifier limits in terms of efficiency when considering more qubits (\eg we tested 6).
We stress however that 4 qubits is already enough to capture many interesting scenarios and to quickly obtain a
certain level of security guarantees.

We conjecture that there surely are a lot of redundancies in those explorations and one could come up with
sound restrictions to mitigate this explosion.
We leave this as future work.

\subsubsection{Verification of BB84 QBC}
\label{sec:verif:proto:bitcomm}
We focus on the binding property of the bit commitment scheme.
Namely, we are interested in proving that a malicious Alice cannot pretend to have committed on a different base:
if Bob accepts the Alice's final claim then \texttt{base} in the unveil procedure must be the same as \texttt{base}
in the commit procedure. We thus consider that Alice is entirely controlled by the intruder.

\paragraph{Modeling the binding property.}
Since the intruder can produce any data to be sent to Bob during the commit procedure,
capturing the intruder's intent to commit on a value during that procedure is non-trivial.
For instance, using the messages sent by the intruder to Bob to do so constrains the attacker
to send messages of specific formats (\eg a series of qubits encoded in an uniform base linked to the bit).

We took a different approach: we reveal a fresh value {\em after the commit procedure} (that was previously secret
for the attacker) and check if the intruder is able to make Bob believes he has committed on that value.

\paragraph{Modeling Choices.}
We fix the security parameter to 4 which is enough to explore interesting executions and find \Cref{at:EPR-bit}.
The four bases randomly chosen by Bob are taken from the same sets, and we make them interchangeable. 
Since the possible values on which the intruder can commit on are in $\mathbb{B}$, we fix the scenario
where the intruder does not known the base corresponding to $[+]$ (but knows $[x]$) until the end of the commit procedure
while it must convince Bob he has committed on $[+]$ at the end of the unveil procedure.
We obviously adapt our guessing capabilities accordingly by removing the \textsc{Complem} rule and by restricting our
\textsc{Guessing} rule.

As part from that, we model the quantum channels and intruder's capabilities as explained in \Cref{sec:DY:qu}.
Those parts of the model are exactly the same between the BB84 QKD (\Cref{sec:verif:proto:BB84}) and
QBC (\Cref{sec:verif:proto:bitcomm}) models, witnessing the genericity of our modeling choices.

The resulting model can be found in the file \texttt{QBC\_BB84.spthy} from~\cite{models}.

\paragraph{Results.}
We check the binding property for two intruder models:
\begin{enumerate}
\item {\em Guess attacker}: the intruder model described in \Cref{sec:DY} excluding EPR capabilities (\ie excluding \Cref{sec:DY:qu:EPR}).
\item {\em Full attacker}: the intruder model described in \Cref{sec:DY}.
\end{enumerate}
Using \tamarin, we were able to automatically find {\em the} EPR attack described in \Cref{at:EPR} for the full attacker (2) and
automatically prove that the scheme is secure against an attacker who cannot forge EPR pairs (1) for the scenario we consider here.
Both results are automatically obtained in a couple of seconds. We also have analyzed those properties for a
smaller scenario with 2 qubits (see \texttt{QBC\_BB84.spthy} from~\cite{models})
and obtained similar results instantly.
The \tamarin representation of the EPR attack for two qubits against the threat model (2) found automatically is depicted in \Cref{fig:at:EPR}.

\begin{figure}[ht]
  \centering
  \includegraphics[width=1\textwidth,page=1]{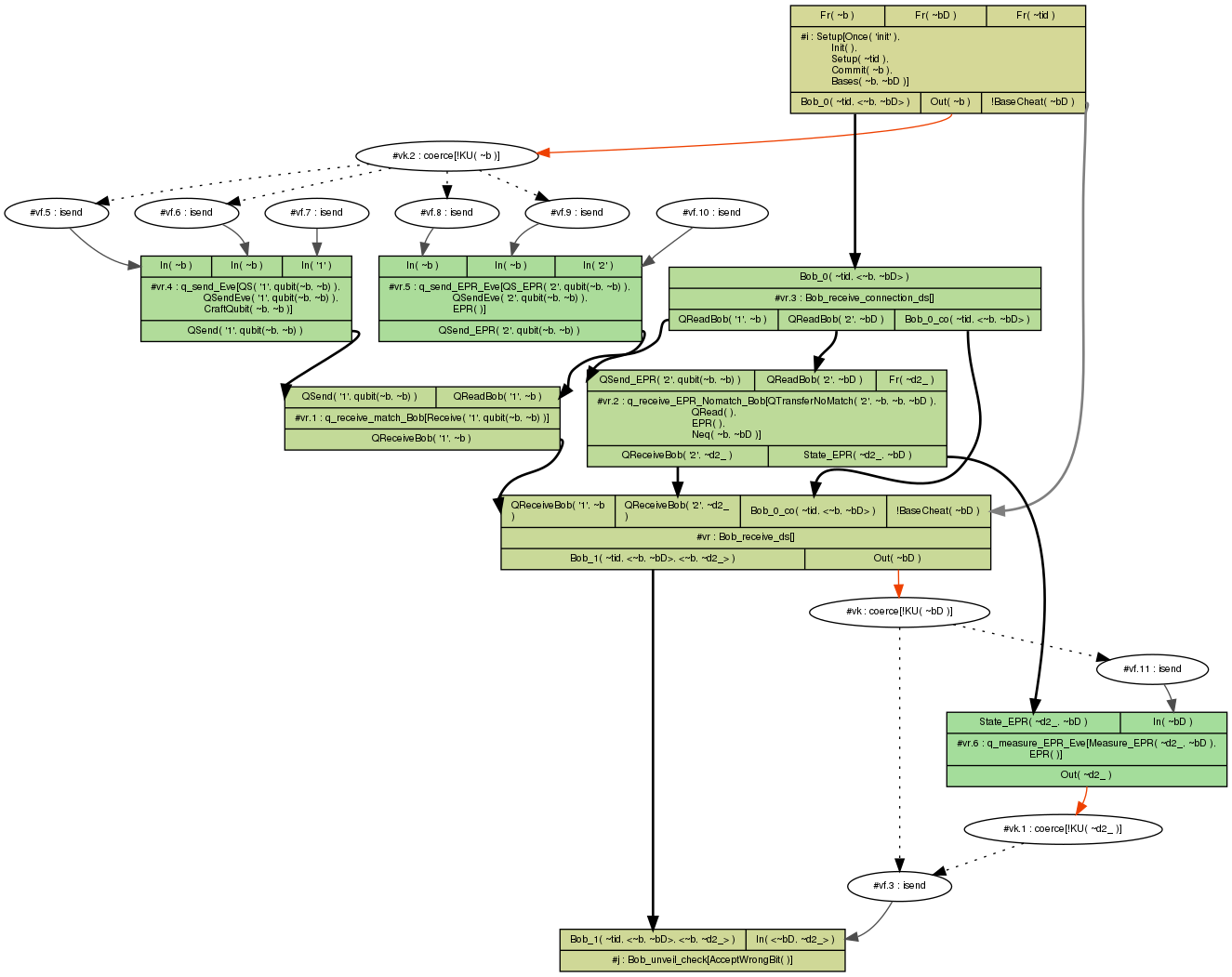}
  \caption{EPR attack as found by \tamarin.}
  \label{fig:at:EPR}
\end{figure}

\subsection{Verification With Other Tools}
\label{sec:verif:other}
We also have explored the use of other state-of-the-art tools to analyze our case studies.
We notably have modeled the BB84 QBC protocol with 4 qubits in \proverif~\cite{PROVERIF}, \deepsec~\cite{DEEPSEC}, and \akiss~\cite{AKISS}. All our models are available at~\cite{models}.
Our goal here is only to show that there is no fundamental problem in modeling and verifying quantum protocols
using our abstractions in those other tools.

\proverif was not able to terminate\footnote{A new version of \proverif currently under development that notably aims at reducing the number of false attacks was able to terminate in seconds. We do not discuss those results as they cannot be reproduced at the moment.} when using the last public version \texttt{2.00} or when using the extension \textsc{GsVerif}~\cite{cheval2018little}.

Note that \akiss and \deepsec are tools that decide a notion of behavioral equivalence for a bounded number of sessions.
The restriction to a bounded number of sessions is not a limitation in our case as we only deal with a bounded number of sessions
of Bob and Alice, thus dealing with a bounded number of classical bits and of exchanged qubits that only require a bounded number of attacker processes handling quantum and probability capabilities.
While those tools do not check for reachability properties, we were able to encoded the aforementioned security properties,
taking the form of reachability predicates, into behavioral equivalence properties; using encoding in the folklore.

\akiss (last development version as time of writing) terminates in about 3 minutes with 1 core (\akiss is single-threaded)
while \deepsec \texttt{version 1.0.0} terminates in about 5 minutes with 16 cores. We expect the verification time
of larger examples (\eg BB84 QKD with 4 qubits) to blow up but we leave this investigation as future work.



\section{Conclusion}
\label{sec:conclu}
We have explored how symbolic models and the Dolev Yao attacker can be extended for
modeling quantum protocols. We have proposed such an extension, balancing the trade-off between
precision with regards to quantum physics and the level of automation we can leverage using existing
symbolic verifiers.
We have explained how our extended model can be encoded in the \tamarin verifier and
have evaluated our trade-offs on the well-known BB84 QKD and QBC quantum protocols.
The results we have obtained show that our model is precise enough to capture known attacks
and explore different threat models in order to identify minimal security assumptions.

This leaves several exciting avenues for future research.
First, the lack of scalability (\eg dealing with a dozen of qubits) is certainly a limitation.
This raises the question of how this can be avoided by developing
finer and less costly encodings or mitigated by reducing the large amount of explorations
that are potentially redundant (\eg symmetry reductions, partial order reduction).
Next, we would like to analyze more recent quantum protocols
(such as~\cite{huang2014cryptanalysis,shukla2017semi}) in order to evaluate
the capacity of our method to find (ideally new) attacks, even for simple scenarios.
Finally, we view this work as a first tentative step to define a quantum symbolic model.
There are certainly different and possibly better ways to balance precision and efficiency.
For instance, it would certainly be interesting to explore extensions that model
lower level quantum properties and mechanisms (\eg atomic transformations on qubits)
rather than built-in high level principles (\eg EPR pairs creation and measurement)
as done in the present work.


\subsection*{Acknwoledgements}
The author would like to thank David Basin and Ralf Sasse for their
helpful comments and suggestions on earlier drafts of this paper
and to Renato Renner for his quantum expert feedback.

\bibliographystyle{abbrv}
\bibliography{bib-q,5GAKA,bib,bib-intro,bib-survey} 

\appendix
\section{Bell states}
\label{ap:Bell}
Consider the following circuit acting on $\mathbb{C}^2$
(the first gate is the Hadamard gate and the second gate is the C-NOT gate):
$$\Qcircuit @C=1em @R=.7em {
  & \gate{H} & \ctrl{1} & \qw \\
  & \qw & \targ &  \qw
  }$$

The circuit above maps:
\begin{itemize}
\item $\ket{11}$ to $\ket{\Phi^-}\deff= \frac{\ket{01} - \ket{10}}{\sqrt{2}}$ (the Bell state),
\item $\ket{01}$ to $\ket{\Phi^+}\deff= \frac{\ket{01} + \ket{10}}{\sqrt{2}}$,
\item $\ket{10}$ to $\ket{\Psi^-}\deff= \frac{\ket{00} - \ket{11}}{\sqrt{2}}$,
\item $\ket{00}$ to $\ket{\Psi^+}\deff=\frac{\ket{00} + \ket{11}}{\sqrt{2}}$.
\end{itemize}
Note that $\{\ket{\Phi^-}, \ket{\Phi^+}, \ket{\Psi^-}, \ket{\Psi^+}\}$ forms an orthonormal basis of $\mathbb{C}^2$, called
the {\em Bell basis}.


\end{document}